\title{Sparse Power Factorization: Balancing peakiness and sample complexity \thanks{
		The results of this paper have been presented in part at the 12th International Conference on Sampling Theory and Applications, July 3-7, 2017, Tallinn, Estonia \cite{geppert2017refined} and in the IEEE Statistical Signal Processing Workshop 2018, June 10-13, Freiburg, Germany \cite{stoeger2018refined}.}}
\author{Jakob Geppert\thanks{Institute for Numerical and Applied Mathematics, Universit\"at G\"ottingen, 37083 G\"ottingen, Germany}, Felix Krahmer\thanks{Department of Mathematics, Technische Universit\"at M\"unchen, 85748 Garching/Munich, Germany}, Dominik St\"oger\footnotemark[3]}
\definecolor{Gray}{gray}{0.9}
\newcommand*\mnote[3][0pt]{%
  \if l#2\reversemarginpar\def\pointer{\blacktriangleright}%
    \def\stackalignment{r}\fi%
  \if r#2\normalmarginpar\def\pointer{\blacktriangleleft}%
    \def\stackalignment{l}\fi%
  \marginpar{%
    \topinset{%
      \scalebox{1.5}{\textcolor{blue}{$\pointer$}}}{%
      \belowbaseline[-1.5\baselineskip-#1]{%
        \stackengine%
          {-5pt}%
          {%
          \if l#2\hspace*{-6mm}\fi%
          \fcolorbox{blue}{blue!20}{\parbox{11mm}%
            {\vspace{1pt}\footnotesize\raggedright#3}}}%
          {}%
          {O}%
          {l}%
          {F}%
          {F}%
          {S}%
        }%
      }{%
      0.8ex+#1}{-2.3ex}%
  }%
}
\newcommand*\mquestion[3][0pt]{%
  \if l#2\reversemarginpar\def\pointer{\blacktriangleright}%
    \def\stackalignment{r}\fi%
  \if r#2\normalmarginpar\def\pointer{\blacktriangleleft}%
    \def\stackalignment{l}\fi%
  \marginpar{%
    \topinset{%
      \scalebox{1.5}{\textcolor{red}{$\pointer$}}}{%
      \belowbaseline[-1.5\baselineskip-#1]{%
        \stackengine%
          {-5pt}%
          {%
          \if l#2\hspace*{-6mm}\fi%
          \fcolorbox{red}{red!20}{\parbox{11mm}%
            {\vspace{1pt}\footnotesize\raggedright#3}}}%
          {}%
          {O}%
          {l}%
          {F}%
          {F}%
          {S}%
        }%
      }{%
      0.8ex+#1}{-2.3ex}%
  }%
}
\definecolor{ao}{rgb}{0.0, 0.5, 0.0}
\newcommandx{\unsure}[2][1=]{\todo[linecolor=red,backgroundcolor=red!25,bordercolor=red,#1]{#2}}
\newcommandx{\thiswillnotshow}[2][1=]{\todo[disable,#1]{#2}}
\newcommand{\argmax}{\operatorname*{arg\,max}}
\newcommand{\vertiii}[1]{{\left\vert\kern-0.25ex\left\vert\kern-0.25ex\left\vert #1 
    \right\vert\kern-0.25ex\right\vert\kern-0.25ex\right\vert}}
\numberwithin{equation}{section}
\theoremstyle{definition}
 \newtheorem{defi}{Definition}[section]
 \newtheorem{remark}[defi]{Remark}
 \newtheorem{alg}[defi]{Algorithm}
 \theoremstyle{plain}
 \newtheorem{thm}[defi]{Theorem}
 \newtheorem{prop}[defi]{Proposition}
 \newtheorem{lemma}[defi]{Lemma}
\newcommand{\sino}{\sin(\omega_\mathrm{sup})}
\newcommand{\piu}{\|\Pi_{\widehat J_1}u\|}
\begin{document}
	
	\maketitle
	\thispagestyle{plain}
	\pagestyle{plain}

	\begin{abstract}
		In many applications, one is faced with an inverse problem, where the known signal depends in a bilinear way on two unknown input vectors. Often at least one of the input vectors is assumed to be sparse, i.e., to have only few non-zero entries. Sparse Power Factorization (SPF), proposed by Lee, Wu, and Bresler, aims to tackle this problem. They have established recovery guarantees for a somewhat restrictive class of signals under the assumption that the measurements are random. We generalize these recovery guarantees to a significantly enlarged and more realistic signal class at the expense of a moderately increased number of measurements.
	\end{abstract}
	
	
	\section{Introduction}
	Many measurement operations in signal and image processing as well as in communication follow a bilinear model. Namely, in addition to the measurements depending linearly on the unknown signal, also certain parameters of the measurement procedure enter in a linear fashion. Hence one cannot employ a linear model (for example, in connection compressed sensing techniques \cite{candes2006robust}) unless one has an accurate estimate of these parameters.

	When such estimates are not available or too expensive to obtain, there are certain asymmetric scenarios when one of the inputs can be recovered even though the other one is out of reach (e.g., \cite{xu1995least,lee2017spectral}, this scenario is sometimes referred to as passive imaging). In most cases, however, the natural aim will be to recover both the signal and the parameters, that is, to solve the associated bilinear inverse problem. Even when some estimates of the parameters are available, such a unified approach will be preferred in many situations, especially when information is limited. Consequently, the study of bilinear inverse problems, including but not limited to the important problem of blind deconvolution, has been an active area of research for many years \cite{Haykin1994}.
	
	Observing that bilinear maps admit a representation as a linear map in the rank one outer product of the unknown signal and the parameter vector, one can approach such problems using tools from the theory of low-rank recovery (see, e.g., \cite{ahmedrechtromberg,lingstrohmer,jung2017blind}). Under sparsity assumptions, that is, when the signals and/or parameter vectors admit an approximate representation using just a small (but unknown) subset of an appropriate basis (for more details regarding when such assumptions appear in bilinear inverse problems, see  \cite{strohmer}), however, the direct applicability of these approaches is limited, as two competing objectives arise: one aims to simultaneously minimize rank and sparsity. As a consequence, the problem becomes considerably more difficult; Oymak et al., for example, have demonstrated that minimizing linear combinations of the nuclear norm (a standard convex proxy for the rank) and the $\ell_1$ norm (the corresponding quantity for sparsity) exhibits suboptimal scaling \cite{oymak2015simultaneously}.	In fact it is not even clear if without additional assumptions efficient recovery is at all possible for a near-linear number of measurements (as it would be predicted identifiability considerations \cite{doi:10.1137/16M1067469}).
	
	Recently, a number of nonconvex algorithms for bilinear inverse problems have been proposed. For example, for such problems without sparsity constraints several such algorithms have been analyzed for blind deconvolution and related problems\cite{li2016rapid,ling2017regularized} with near-optimal recovery guarantees. In contrast, our understanding of bilinear inverse problems with sparsity constraints is only in its beginning. Recently, several algorithms have been analyzed for sparse phase retrieval  \cite{bahmani2017anchored,soltanolkotabi2017structured} or blind deconvolution with sparsity constraints \cite{qu2017convolutional}. The recovery guarantees for these algorithms, however, are either suboptimal in the number of necessary measurements or only local convergence guarantees are available, i.e., one relies on the existence of a good initialization. (A noteworthy exception are the two related papers \cite{bahmani2016near,iwen2017robust}, where a two-stage approach for (sparsity) constrained bilinear inverse problems is proposed, which achieves recovery at near-optimal rate. However, the algorithm relies on a special nested structure of the measurements, which is not feasible for many practical applications.)
	
	In \cite{paper2} Lee, Wu, and Bresler  introduced the {\em sparse power factorization} (SPF) method together with a tractable initialization procedure based on  alternating minimization. They also provide a first performance analysis of their method for random bilinear measurements in the sense that their lifted representation is a matrix with independent Gaussian entries.
	That is, they work with linear operators $\mathcal{A}\colon \mathbb{C}^{n_1 \times n_2} \longrightarrow \mathbb{C}^{m} $ that admit a representation as
	\begin{equation*}
	\left(  \mathcal{A} \left( X \right)   \right) \left( \ell \right)= \text{trace} \left( A_{\ell}^\ast X  \right)
	\end{equation*}
	for i.\,i.\,d.\ Gaussian matrices $ A_{\ell}  \in \mathbb{C}^{n_1 \times n_2} $.

	For such measurements they show that with high probability, SPF converges locally to the right solution, i.e., one has convergence for initializations not too far from the signal to be recovered.

	For signals that have a very large entry, they also devise a tractable initialization procedure -- they call it thresholding initialization -- such that one has global convergence to the right solution. Local convergence has also been shown for the multi-penalty approach {\em A-T-LAS$_{1,2}$} \cite{fornasier2018}, but to our knowledge, comparable global recovery guarantees are not available to date. This is why we focus on SPF in this paper, using the results of \cite{paper2} as our starting point.
	
	The precise condition for their guarantee to hold is that both (normalized) input signals need to be larger than some $c>\tfrac{1}{2}$ in supremum norm -- more than one quarter of its mass needs to be located in just one entry, that is, the signals must have a very high peak to average power ratio.
	
	 In this paper, we considerably weaken this rather strong restriction in two ways. Firstly, we show that similar results hold for smaller lower bounds $c$ at the expense of a moderately increased number of measurements. Secondly, we show that similar results can be obtained when the mass of one of the signals is concentrated in more than one, but still a small number of entries.

	The SPF algorithm, the thresholding initialization, and the resulting recovery guarantees are reviewed in \Cref{spfsection} before we discuss and prove our results in \Cref{resultsection} and Section \Cref{sectionproof}.
	
	\subsection*{Notation}
	Throughout the paper we will use the following notation. By $ \left[n\right] $ we will denote the set $ \left\{1; \ldots; n \right\} $. For any set $J$ we will denote its cardinality by $ \vert J \vert $. For a vector $v\in \mathbb{C}^m$ we will denote by $ \Vert v \Vert $ its $ \ell_2$-norm and by $ \Vert v \Vert_{\infty}$ the modulus of its largest entry. If $J \subset \left[n\right] $ we will by $v_J$ denote the restriction of $v$ to elements indexed by $J$. For matrices $ A \in \mathbb{C}^{n_1 \times n_2} $ we will denote by $ \Vert A \Vert_F$ its Frobenius norm and by $\Vert A \Vert $ its spectral norm, i.e., the largest singular value of $A$. 	
	
	\section{Sparse Power Factorization: Algorithm and Initialization}\label{spfsection}
	
	\subsection{Problem formulation}
%
%
	
	
	Let $ b \in \mathbb{C}^m $ be given by
	\begin{equation*}
	b:= B(u,v)+z,
	\end{equation*}
	where $ B \colon \mathbb{C}^{n_1} \times \mathbb{C}^{n_2} \rightarrow \mathbb{C}^m $ is a bilinear map and $z\in \mathbb{C}^m$ is noise. Recall that one can represent the bilinear map $B \colon \mathbb{C}^{n_1}\times \mathbb{C}^{n_2} \rightarrow \mathbb{C}^{m}$ by a linear map
	$\mathcal{A}\colon\mathbb{C}^{n_1\times n_2}\longrightarrow\mathbb{C}^m$, which satisfies
	\[
	B(u,v)= \mathcal{A}(uv^\ast).
	\]
	for all vectors $u \in \mathbb{C}^{n_1}$ and all $ v \in \mathbb{C}^{n_2}$. Note that such a linear map $ \mathcal{A}$ is characterized by a (unique) set of matrices $ \left\{ A_{\ell}  \right\}^m_{\ell =1}  \subset \mathbb{C}^{n_1 \times n_2} $ such that the $\ell$th entry of $ \mathcal{A}\left( X \right)$ is given by
	\begin{equation}\label{operatorrepresentation}
	\left(  \mathcal{A} \left( X \right)   \right) \left( \ell \right)= \text{trace} \left( A_{\ell}^\ast X  \right).
	\end{equation}
	In this notation, our goal will be to reconstruct $u$ and $v$ from linear measurements given by
	\begin{equation*}
	b_{\ell} = \text{trace} \left( A^*_{\ell} uv^* \right)
	\end{equation*}
	At the core of the Sparse Power Factorization Algorithm, as introduced in \cite{paper2}, are the linear operators $F \colon \mathbb{C}^{n_2} \longrightarrow \mathbb{C}^{m \times n_1} $ and $ G \colon \mathbb{C}^{n_1} \longrightarrow \mathbb{C}^{m \times n_2} $ defined by
	
		$$F(y) := 
		\begin{pmatrix}
		y^\ast A_1^\ast\\
		\vdots\\
		y^\ast A_m^\ast
		\end{pmatrix},
		\quad
		G(x) := 
		\begin{pmatrix}
		x^\ast A_1\\
		\vdots\\
		x^\ast A_m
		\end{pmatrix}.$$
		A direct consequence of this definition is that $$\mathcal{A}(xy^\ast) = [F(y)]x = \overline{[G(x)]y}$$ for all $  x \in \mathbb{C}^{n_1} $ and all $ y \in \mathbb{C}^{n_2} $.

	\subsection{Sparse Power Factorization}
	
The idea of Sparse Power Factorization is to iteratively update estimates $u_t$ and $v_t$ for $u$ and $v$ in an alternating fashion.
That is, in each iteration one keeps one of $v_t$ and $u_t$ fixed and updates the respective other one by solving an (underdetermined) linear system. Solving each of these linear systems then amounts to solving a linear inverse problem with sparsity constraints. Hence, many pursuit algorithms proposed in the context of compressed sensing can be applied such as CoSaMP \cite{needell2009cosamp}, Hard Thresholding Pursuit \cite{foucart2011hard} or Basis Pursuit. In \cite{paper2} the authors used Hard Thresholding Pursuit (HTP) for their analysis and in this paper, we will also restrict ourselves to HTP. With this, the Sparse Power Factorization Algorithm reads as follows.
	
	
\begin{alg}[Algorithm 1 in \cite{paper2}]\label{SPF.alg}~\\\vspace*{-4mm}
	\begin{algorithmic}[1]
		\sffamily\small
		\Require  Operator $\mathcal{A}$, Measurement $b$, Sparsity Constraints $s_1, s_2$, Initialisation $v_0$.
		\Ensure Estimate $\widehat{X}$.
		\State $t \gets 0$
		\While{stop condition not satisfied}
		\State $t \gets t + 1$
		\State $v_{t - 1} \gets \frac{v_{t - 1}}{\big\|v_{t - 1}\big\|}$\label{vt1norm}
		\If{$s_1 < n_1$}
		\State $u_t \gets \mathrm{HTP}(\mathrm{F}(v_{t - 1}), b, s_1)$
		\Else
		\State $u_t \gets \operatorname*{{arg\,min}}\limits_x \big\|b - [\mathrm{F}(v_{t - 1})]x\big\|^2$
		\EndIf
		\State $u_t \gets \frac{u_{t}}{\big\|u_{t}\big\|}$
		\If{$s_2 < n_2$}
		\State $v_t \gets \mathrm{HTP}(\mathrm{G}(u_{t}), \bar{b}, s_2)$
		\Else
		\State $v_t \gets \operatorname*{{arg\,min}}\limits_b \big\|\bar{b}- [\mathrm{G}(u_{t})]b\big\|^2$\label{sparse-else}
		\EndIf
		\EndWhile \\
		\Return $\widehat{X} \gets u_tv_t^{\ast}$
	\end{algorithmic}
\end{alg}	
\noindent The Hard Thresholding Pursuit Algorithm is defined as follows:
\begin{alg}HTP(A, b, s)\label{HTP.alg}~\\\vspace*{-4mm}
	\begin{algorithmic}[1]
		\sffamily\small
		\Require  Measurement matrix $A \in \mathbb{C}^{m \times n}$, measurement $b \in \mathbb{C}^m$, sparsity constraint $s \in \mathbb{N}$.
		\Ensure $ \hat{x} \in \mathbb{C}^n $.
		\State $t \gets 0$
		\While{stop condition not satisfied}
		\State $t \gets t+1 $
		\State $ w= x_{t-1} + A^* \left(  b-Ax_{t-1} \right) $
		\State $J \gets  \underset{J \subset \left[n\right], \ \vert J \vert =s}{\arg \max} \ \Vert w_J \Vert  $
		\State $x_t \gets \underset{x: \text{supp} \left(x\right) \subset J }{\arg \min}  \Vert Ax-b \Vert$
		\EndWhile \\
		\Return $ \hat{x} \gets x $
	\end{algorithmic}
\end{alg}

	
	\subsection{Initialization}\label{Initialization}
%
	As for many other non-convex algorithms (e.g., \cite{Jain,candes2015phase}), the convergence properties of Sparse Power Factorization depend crucially on the choice of the starting point. In \cite{Jain,candes2015phase} the starting point is chosen via a spectral initialization. That is, one chooses the leading left- and right-singular vectors of $ \mathcal{A}^* \left(y\right) $ as the starting point. However, in order to work this approach requires that the number of measurements is at the order of $ \max \left\{n_1, n_2\right\} $, which will in general not be optimal as it does not take into account the sparsity of the vectors $u$ and $v$. One way to incorporate the sparsity assumption would be to solve the  Sparse Principal Component Analysis (SparsePCA) problem.
	\begin{equation}\label{equ:sparsePCA}
	\begin{split}
	\max \quad &  \text{Re} \left( \tilde{u}^* \mathcal{A}^* \left(y \right) \tilde{v} \right) \\
	\text{subject to} \quad &\Vert \tilde{u} \Vert_0 \le s_1, \Vert \tilde{u} \Vert =1\\
	&\Vert \tilde{v} \Vert_0 \le s_2, \Vert \tilde{v} \Vert =1,
	\end{split}
	\end{equation}
	where $\Vert \cdot \Vert_0  $ denotes the number of non-zero entries. As it was shown in \cite[Proposition III.4]{paper2}, Algorithm \ref{SPF.alg}, if initialized by a solution of \eqref{equ:sparsePCA} is able to recover the solution $u$ and $v$ from a number of measurements at the order of $ \left( s_1 + s_2 \right) \max \left\{ \frac{s_1}{n_1}, \frac{s_2}{n_2}  \right\} $. However, the SparsePCA problem has been shown to be NP-hard \cite{tillmann2014computational}. Nevertheless, in the last fifteen years there has been a lot of research on the SparsePCA problem and, in particular, on tractable (i.e., polynomial-time) algorithms, which yield good approximations to the true solution. Several computationally tractable algorithms have been proposed for solving (\ref{equ:sparsePCA}), e.g., thresholdings algorithms \cite{ma2013sparse}, a general version of the power method  \cite{journee2010generalized} and semidefinite programs \cite{d2008optimal}. From the statistical perspective, a particular emphasis has been put for computationally efficient or at least tractable algorithms on the analysis of the single spike model\cite{amini2009high,krauthgamer2015semidefinite,deshpande2014sparse}. These approaches, however, require that the number of samples scales with the square of the number of non-zero entries of the signal to estimate (up to $\log$-factors). This raised the question whether there are fundamental barriers preventing the SparsePCA problem to be solved in polynomial time at a sampling rate close to the information theoretic limit. Indeed, it has been shown that an algorithm, that achieves this, would also allow for an algorithm which solves the $k$-clique problem in polynomial time \cite{berthet2013optimal,wang2016statistical}. However, a widely believed conjecture in theoretical computer science states, that this is not the case, which indicates that this approach will not be suited for initializing bilinear recovery problems either.\\

	\noindent In this manuscript we will analyse the following initialization algorithm, which is the one proposed in \cite{paper2}. For a set $ J_1 \subset \left[n\right] $, respectively $ J_2 \subset \left[n_2\right] $ in the following we will denote by $ \Pi_{J_1} $, respectively $ \Pi_{J_2} $ the matrix, which projects a vector onto the components which belong to $ J_1$, respectively $J_2$.
	
\begin{alg}[Algorithm 3 in \cite{paper2}]\label{SPF_alginit}~\\\vspace*{-4mm}
	\begin{algorithmic}[1]
		\sffamily\small
		\Require  Operator $\mathcal{A}$, Measurement $b$, Sparsity Constraints $s_1, s_2$, 
		\Ensure Initial guess $v_0$ for $ v \in \mathbb{C}^{n_2} $.
		\State For all $ i \in \left[ n_1 \right] $ let $ \xi_i  $ be the $ \ell_2$-norm of the best $s_2$-sparse approximation of the $i$th row of the matrix $ \mathcal{A}^* \left(b\right)  \in \mathbb{C}^{n_1 \times n_2} $.
		\State Let $ \widehat{J_1} \subset \left[n_1\right] $ be the set of the $ s_2 $ largest elements in $ \left\{\xi_1; \xi_2; \ldots ; \xi_{n_1}  \right\} $
		\State Choose $\widehat J_2$ to contain the indices of the $s_2$ columns of $\Pi_{\widehat J_1} \mathcal{A}^* \left( b \right)$ largest in $\ell_2$ norm, i.e.,
		\begin{equation}\label{equ:defwidetilde}
		\widehat J_2  := \underset{ J \subset \left[ n_2 \right], \ \vert J \vert = s_2 }{\argmax}    \big\|\Pi_{\widehat J_1}[\mathcal{A}^\ast(b)]\Pi_{  J }\big\|_\mathrm{F}.
		\end{equation}\\
		\Return  $v_0$, the leading right singular vector of $\Pi_{\widehat{J_1}}[\mathcal{A}^{\ast}(b)]\Pi_{\widehat{J_2}}$.
	\end{algorithmic}
\end{alg}	
	

	\section{Previous results}
	
	In the following we will work with the that the model (\ref{operatorrepresentation}), i.e., we observe
	\begin{equation*}
	\text{trace} \left( A_{\ell}^* uv^* \right) + z_{\ell}
	\end{equation*}
	where $u \in \mathbb{C}^{n_1}$ is $s_1$-sparse, $ v \in \mathbb{C}^{n_2}$ is $s_2$-sparse, and $z \in \mathbb{C}^m $ is noise. As in \cite{paper2}, $ \nu \left(z\right) $ will quantify the Noise-to-Signal Ratio by 
	\begin{equation}\label{def:noiselevel}
	\nu \left(z\right) := \frac{\Vert z \Vert}{\Vert \mathcal{A} \left(uv^*\right) \Vert }.
	\end{equation}
	\noindent For our analysis, $ \mathcal{A} $ will be a Gaussian linear operator, that is, all the entries of the matrices $ A_1, \ldots, A_{m}$ are independent with distribution $ \mathcal{CN} \left(0, \frac{1}{m} \right) $. (Here a complex-valued random variable $X$ has distribution $ \mathcal{CN} \left(0,\frac{1}{m}\right) $ if its real and complex part are independent Gaussians with expectation $0$ and variance $\sqrt{\frac{\sigma}{2}} $.)

	
	\noindent In \cite{paper2}, the authors derived that Algorithm \ref{SPF.alg}, if initialized by Algorithm \ref{SPF_alginit}, is able to recover both $u$ and $v$ (up to scale ambiguity), if both $u$ and $v$ belong to a certain restricted class of signals. More precisely, they proved the following result.
	\begin{thm}[{\cite[see Theorems III.7 and Theorem III.10]{paper2}}]\label{th1}
		Assume that $\mathcal{A} \colon \mathbb{C}^{n_1 \times n_2 } \longrightarrow \mathbb{C}^{m} $ is a Gaussian linear operator as described above. Let $ b= \mathcal{A} \left( uv^* \right) + z$, where $u$ is $s_1$-sparse and $v$ is $s_2$-sparse. Suppose that $ \Vert u \Vert_{\infty} \ge 0.78 \Vert u \Vert  $, $  \Vert v \Vert_{\infty} \ge 0.78 \Vert v \Vert $, and that the noise level satisfies $ \nu \left(z\right) \le 0.04 $. Then, with probability exceeding $ 1- \exp \left(-c_1 m\right) $, the output of the Algorithm \ref{SPF.alg}, initialized by Algorithm \ref{SPF_alginit}, converges linearly to $uv^*$ provided that
	\begin{equation*}
		m \ge c_2 \left(s_1 + s_2 \right) \log \left(\max\left\{\frac{n_1}{s_1},\frac{n_2}{s_2}\right\}\right),
		\end{equation*}	
	where $c_1,c_2>0$ are absolute constants.
	\end{thm}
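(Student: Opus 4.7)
The plan is to decompose the argument into two essentially independent parts: (i) a local convergence statement showing that once $v_{t-1}$ is sufficiently close to $v$ (modulo the trivial scaling ambiguity $uv^*=(\alpha u)(\bar\alpha^{-1}v)^*$), a single outer iteration of \Cref{SPF.alg} contracts the distance of $u_tv_t^*$ to $uv^*$; and (ii) an initialization analysis showing that \Cref{SPF_alginit} produces $v_0$ already inside this basin of attraction, under the peakiness hypothesis. Both parts rest on a common structural fact about Gaussian $\mathcal A$: a restricted isometry property on the set $\mathcal S := \{xy^* : x\in\mathbb{C}^{n_1} \text{ is } s_1\text{-sparse},\ y\in\mathbb{C}^{n_2} \text{ is } s_2\text{-sparse}\}$. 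First I would establish that this rank-one sparse RIP with a small constant $\delta$ holds with probability at least $1-\exp(-c_1 m)$ whenever $m\gtrsim (s_1+s_2)\log\max(n_1/s_1,n_2/s_2)$, via a standard $\epsilon$-net plus concentration argument; the logarithm of the covering number of $\mathcal S$ scales exactly as the claimed bound.

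\textbf{Local convergence.} Conditioning on this RIP event, I would fix a unit iterate $v_{t-1}$ close to $v/\|v\|$ and analyse the HTP inner subproblem $[F(v_{t-1})]x\approx b$. The key observation is that the rank-one sparse RIP of $\mathcal A$ transfers directly to an $s_1$-sparse RIP for $F(v_{t-1})$: since $\|[F(v_{t-1})]x\|=\|\mathcal A(xv_{t-1}^*)\|$ and $xv_{t-1}^*$ lies in $\mathcal S$ after restricting $v_{t-1}$ to its dominating coordinates. Using the decomposition $b-[F(v_{t-1})]u=\mathcal A(u(v-v_{t-1})^*)+z$, the effective noise of the sparse linear inverse problem is controlled by $\|v-v_{t-1}\|\,\|u\|+\|z\|$. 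Plugging this into the standard HTP convergence analysis of \cite{foucart2011hard} yields $\|u_t-\alpha u\|\lesssim \rho\|v_{t-1}-v\|+\nu(z)\|u\|$ for some scalar $\alpha$ and a contraction factor $\rho<1$, provided $\delta$ and $\nu(z)$ are small enough. A symmetric argument for the $v$-update closes the recursion and gives linear convergence of $u_tv_t^*$ to $uv^*$.

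\textbf{Initialization and the main obstacle.} By RIP, $\mathcal A^*(b)$ is close to $uv^*$ on every sparse test direction. Consequently the row quantities $\xi_i$ of \Cref{SPF_alginit} satisfy $\xi_i\approx |u_i|\,\|v\|$ up to an RIP and noise perturbation, because the best $s_2$-sparse approximation of the $i$-th row of $uv^*$ is supported on $\supp(v)$ with $\ell_2$-norm exactly $|u_i|\,\|v\|$. The peakiness assumption $\|u\|_\infty\geq 0.78\|u\|$ ensures that the index $i^*$ of the dominant entry produces a $\xi_{i^*}$ strictly larger than the remaining $\xi_i$, so $i^*\in\widehat{J}_1$ with high probability; an analogous column-wise argument applied to $\Pi_{\widehat{J}_1}\mathcal A^*(b)$ and the peakiness of $v$ places $j^*$ in $\widehat{J}_2$. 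Once these dominant indices are captured, $\Pi_{\widehat{J}_1}\mathcal A^*(b)\Pi_{\widehat{J}_2}$ is a small perturbation of a rank-one matrix with right singular vector $v/\|v\|$, so a Wedin-type perturbation bound yields $v_0$ inside the local basin. The part I expect to be delicate is the quantitative bookkeeping: the peakiness constant $0.78$, the RIP constant $\delta$, and the noise level $\nu(z)\leq 0.04$ must be balanced so that the dominant rank-one contribution beats both the off-peak residual and the RIP/noise perturbation in every selection step, while still leaving enough slack for Wedin's estimate to land $v_0$ within the basin required by the local convergence argument.
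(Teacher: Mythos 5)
First, a point of context: the paper does not actually prove Theorem~\ref{th1} --- it is quoted from \cite{paper2} (Theorems III.7 and III.10 there), and the paper only re-uses its ingredients, namely the rank-one sparse RIP (Lemma~\ref{thm37}), the local convergence result Theorem~\ref{thm39} with basin of attraction described by $\omega_{\sup}$, and the initialization Lemmas~\ref{lemma8.10}, \ref{lemma:lastlemma} and \ref{lemma:supportlowerbound}, in order to prove the generalization Theorem~\ref{thm:mainresultreadable}. Your decomposition (RIP via a net argument, local contraction of alternating HTP, initialization landing inside the basin) is exactly this architecture, and your initialization analysis is essentially the paper's route: the claim that the dominant index of $u$ enters $\widehat J_1$ is Lemma~\ref{lemma:lastlemma}, and your Wedin-type step is Lemma~\ref{lemma8.10}. (Minor remark: on the RIP event these selection arguments are deterministic, so no additional "with high probability" is needed there.)

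The genuine gap is in the local convergence step. You bound the effective noise of the $u$-update by $\|\mathcal A(u(v-v_{t-1})^*)\|+\|z\|\lesssim\|v-v_{t-1}\|\,\|u\|+\|z\|$ and then invoke the black-box HTP guarantee of \cite{foucart2011hard}. That guarantee returns an estimate whose error is $\tau\|e\|$ with a noise-amplification factor $\tau$ that stays above $1$ even as $\delta\to 0$ (already the oracle least-squares solution on the true support can incur error $\|e\|/\sqrt{1-\delta}$ against adversarial $e$). Writing $v=\cos(\omega)v_{t-1}+\sin(\omega)w$ with $w\perp v_{t-1}$, the target of the HTP step is $\cos(\omega)u$ and the effective noise has norm about $\sin(\omega)\|u\|$, so the generic bound only yields $\sin(\angle(u_t,u))\lesssim\tau\tan(\omega)$, which exceeds $\sin(\omega)$: amplification, not contraction, no matter how small $\delta$ and $\nu$ are. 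The contraction in \cite{paper2} comes from the fact that this noise is structured: since $\langle xv_{t-1}^*,uw^*\rangle_F=\langle x,u\rangle\langle w,v_{t-1}\rangle=0$, the $(3s_1,3s_2,2)$-RIP implies the restricted-orthogonality estimate $\vert\langle\mathcal A(xv_{t-1}^*),\mathcal A(uw^*)\rangle\vert\le\delta\|x\|\|u\|$ for appropriately sparse $x$, so the correlations that actually enter the HTP analysis are $O(\delta\tan\omega)$ rather than $O(\tan\omega)$. This is precisely why $\omega_{\sup}$ in \eqref{def:omegasup} contains the term $\delta\tan(\omega)$, and it is the one idea your sketch is missing; without it the recursion between the $u$- and $v$-updates does not close.
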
	
	\noindent Note that in order to apply Theorem \ref{th1} to signals $u$ and $v$ one needs to require that more than half of the mass of $u$ and $v$ are located in one single entry, which is a severe restriction, which can be prohibitive for many applications. Our goal in the following will be to considerably relax this assumption by slightly increasing the amount of required measurements. We will relax this assumption in two different ways: On the one hand we will show that one can replace $ 0.78$ by an arbitrary small constant that will then show up in the number of measurements. On the other hand we generalize the result to the case that a significant portion of mass of $u$ is concentrated on a small number of entries $k$, rather than just one of them.
	\section{Main Result}\label{resultsection}
	In this section we will state the main result of this article, Theorem \ref{thm:mainresultreadable}. 
	For that, we need to define the norm
	\begin{equation*}
	\Vert x \Vert_{\left[k\right]} := \underset{I \subset \left[n\right], \ \vert I \vert = k}{\max}  \left(  \sum_{i \in I} \vert x_i \vert^2   \right)^{1/2} = \left(  \sum_{i=1}^{k}  \left( x^*_i \right)^2 \right)^{1/2},
	\end{equation*}
	for any $x\in \mathbb{C}^{n_1} $, where $ \left( x^*_i \right)^{n_1}_{i=1} $ denotes the non-increasing rearrangement of $ \left( \vert x_i \vert \right)^{n_1}_{i=1} $. Our main requirement on the vector $u$ will be that a significant amount of its mass is located in the largest $k$ entries, i.e., that $ \frac{\Vert u \Vert_{\left[k\right]}}{\Vert u \Vert} $ is large enough. 
	\begin{thm}\label{thm:mainresultreadable}
		Let $ k \in \left[ n_1 \right] $ and $ 0<\xi<1, 0<\mu<1$. Then, there are absolute constants $C_1, C_2, C_3 >0$ such that if
		\begin{equation}
		m \ge C_1  \max \left\{   \frac{1}{\xi^4 \mu^4}, \frac{k}{\xi^2}  \right\}  \left(s_1 + s_2 \right) \log \left(\max\left\{\frac{n_1}{s_1},\frac{n_2}{s_2}\right\}\right),
		\end{equation}
		then with probability at least $ 1-\exp \left( - C_2   m \right)  $ the following holds.\\
		
		For all $s_1$-sparse $u\in \mathbb{C}^{n_1}$ with $ \Vert u \Vert_{\left[k\right]} \ge \xi \Vert u \Vert $, all $s_2$-sparse $u\in \mathbb{C}^{n_2}$ with $ \Vert v \Vert_{\infty} \ge \mu \Vert v \Vert $, and all $ z \in \mathbb{C}^m $ with  $ \nu\left(z\right) \le C_3 \min \left\{   \xi^2 \mu^2  ; \frac{\xi}{ \sqrt{k}}  \right\}   $ the iterates $\{X_t\}_{t\in\mathbb{N}}$ generated by applying Algorithm \ref{SPF.alg}, initialized by Algorithm \ref{SPF_alginit},  satisfy
		$$\limsup_{t\to\infty} \frac{\|X_t - uv^*  \|_\mathrm{F}}{\| uv^* \|_\mathrm{F}} \leq 8.3 \nu.$$
		Furthermore, the convergence is linear, i.e., for all $ t \gtrsim   \log \left( \frac{1}{\varepsilon}\right) $ we have that
		\begin{equation}\label{equ:linearconvergence}
		\frac{\| X_{t} - uv^* \|_\mathrm{F}}{\| uv^*  \|_\mathrm{F}} \leq 8.3 \nu + \varepsilon.
		\end{equation}
	\end{thm}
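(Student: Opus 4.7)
The plan is to follow the two-phase strategy of \cite{paper2}: first, show that Algorithm \ref{SPF_alginit} produces an initialization $v_0$ whose distance from $v/\|v\|$ is below some threshold $\delta_0$; second, show that Algorithm \ref{SPF.alg} started from such a $v_0$ converges linearly to $uv^*$ (up to the noise floor). Throughout, I would carefully track how each error term scales with $\xi$, $\mu$ and $k$, since the two regimes $1/(\xi^4\mu^4)$ and $k/\xi^2$ in the measurement bound correspond to bottlenecks in different sub-steps.

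For the local convergence phase, the essential ingredient is a restricted isometry property (RIP) of the Gaussian operator $\mathcal{A}$ on matrices whose row-support has size at most $s_1$ and column-support has size at most $s_2$; this RIP holds with high probability provided $m \gtrsim (s_1+s_2)\log(\max\{n_1/s_1,n_2/s_2\})$. Under the RIP, a contraction argument in the spirit of \cite{paper2} shows that once the current iterate is close enough to the ground truth, each HTP step reduces the error by a constant factor, down to an $O(\nu)$ noise floor. The size $\delta_0$ of the basin of attraction is determined by the accuracy needed for the HTP step to correctly identify the supports of $u$ and $v$; my expectation is that $\delta_0$ scales like $\xi\mu^2$, which is what eventually feeds back into the $1/(\xi^4\mu^4)$ factor through the initialization step.

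The initialization analysis is the main source of work and splits into three sub-steps, after writing $\mathcal{A}^\ast(b) = uv^\ast + E$ with $E := (\mathcal{A}^\ast\mathcal{A}-\mathcal{I})(uv^\ast) + \mathcal{A}^\ast(z)$. First, show that $\widehat{J_1}$ contains the indices of the $k$ largest-magnitude entries of $u$. Since $v$ is $s_2$-sparse, each score satisfies $\xi_i = |u_i|\,\|v\| + \mathrm{err}_i$; to separate the $k$-th largest entry of $u$ (of size at least $\xi\|u\|/\sqrt{k}$ by the hypothesis $\|u\|_{[k]} \geq \xi\|u\|$) from indices outside $\mathrm{supp}(u)$, one needs the row-wise errors to be below $\xi\|u\|\,\|v\|/(2\sqrt{k})$. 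A standard $\varepsilon$-net-plus-Gaussian-concentration argument bounds the relative row-wise error by $\sqrt{(s_1+s_2)\log(\cdot)/m}$, and forcing this below $\xi/\sqrt{k}$ produces the $k/\xi^2$ factor. Second, given that $\widehat{J_1}$ captures the top-$k$ of $u$, an analogous argument (using the peakiness $\|v\|_\infty \geq \mu\|v\|$) shows $\widehat{J_2} = \mathrm{supp}(v)$. Third, apply Wedin's perturbation bound to $\Pi_{\widehat{J_1}}\mathcal{A}^\ast(b)\Pi_{\widehat{J_2}} = (\Pi_{\widehat{J_1}}u)\,v^\ast + \tilde E$, whose leading right singular vector is $v/\|v\|$ with spectral gap at least $\|\Pi_{\widehat{J_1}}u\| \cdot \|v\| \geq \xi\|u\|\,\|v\|$. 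This yields a distance bound of order $\|\tilde E\|/(\xi\|u\|\,\|v\|)$ from $v_0$ to $v/\|v\|$, and demanding this be below $\delta_0 \sim \xi\mu^2$ gives the $1/(\xi^4\mu^4)$ factor.

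The main obstacle I anticipate is the first sub-step: obtaining the row-wise $s_2$-sparse approximation error bound at the resolution $\xi/\sqrt{k}$ \emph{uniformly} over all admissible signals $u$ and noise vectors $z$. This requires combining chaining/net arguments over the sparse matrix cone with a suprema-of-Gaussian-processes bound, and one has to carefully track how the effective RIP-type constant in the restricted direction degrades with $k$. Handling the coupling between the randomness of $\mathcal{A}$ and the choice of $s_2$ columns in each row is what distinguishes this argument from the sharp-peakiness regime of \cite{paper2} and ultimately produces the multiplicative $k/\xi^2$ sample-complexity penalty.
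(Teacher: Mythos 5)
Your skeleton (RIP for the lifted Gaussian operator, a support-capture argument for $\widehat J_1$ at resolution $\xi/\sqrt{k}$, a singular-vector perturbation bound for the initializer, and then the local convergence theorem of \cite{paper2}) matches the paper's route, and you correctly locate the origin of the $k/\xi^2$ factor: one must drive the RIP constant down to $\delta \lesssim \xi/\sqrt{k}$ so that the thresholded set $\widetilde J_1=\{j: |u_j|\ge 2(\delta+\nu+\delta\nu)\}\subset\widehat J_1$ still carries a constant fraction of $\|u\|_{[k]}^2$. Your anticipated ``main obstacle'' (uniform row-wise error control) is, however, a non-issue: the paper gets it for free from the $(3s_1,3s_2,2)$-RIP via Lemma \ref{lemma:closeident}, with no additional chaining.

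There is a genuine gap in your second and third sub-steps. First, $\widehat J_2 = \supp(v)$ cannot be established: only $\|v\|_\infty\ge\mu\|v\|$ is assumed, so $v$ may have $s_2-1$ nonzero entries each far below the resolution $\delta+\nu$, and these are indistinguishable from noise. The paper never needs this; Lemma \ref{lemma:supportlowerbound} only lower-bounds $\|\Pi_{\widehat J_1}u\|\,\|\Pi_{\widehat J_2}v\|$ by comparing $\widehat J_2$ against the \emph{single} largest entry of $v$. Second, and more seriously, your target $\sin(\angle(v_0,v))\le\delta_0\sim\xi\mu^2$ is unachievable: $v_0$ is supported on $\widehat J_2$, so $\sin(\angle(v_0,v))\ge\|\Pi_{\widehat J_2}^{\perp}v\|$, which in the scenario above can be as large as $\sqrt{1-\mu^2}$ --- close to $1$ for small $\mu$, not close to $0$. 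No amount of measurements fixes this. The actual mechanism is the reverse of what you describe: the basin of attraction of Theorem \ref{thm39} is nearly the whole hemisphere, since Lemma \ref{sin2} gives $\sin(\omega_{\sup})\ge 1-C_\delta^2(\delta+2\delta\nu+2\nu)^2$, i.e.\ a shortfall from $1$ that is \emph{quadratic} in $\delta+\nu$, whereas the achievable bound of Lemma \ref{lemma8.10} falls short of $1$ by roughly $\|\Pi_{\widehat J_2}v\|^2 - O(\delta+\nu)$. Comparing these two shortfalls is what produces the requirement $\|\Pi_{\widetilde J_1}u\|\,\|v\|_\infty\gtrsim\sqrt{\delta+\nu}$ in \eqref{ineq:peakinessassumption}, hence $\delta\lesssim\xi^2\mu^2$ and the fourth powers in the sample complexity. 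Your Wedin-style argument aiming at a genuinely small angle would have to be replaced by this ``almost-orthogonal is still good enough'' comparison to close the proof.
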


In the following we will discuss some important special cases of Theorem \ref{thm:mainresultreadable}.
\begin{itemize}
\item \textbf{Peaky signals: } In \cite{paper2} the authors discuss recovery guarantees for signals $u$ and $v$ with $\tfrac{\Vert u \Vert_{\infty} }{\Vert u \Vert} $ and  $\tfrac{\Vert v \Vert_{\infty} }{\Vert v \Vert}  $, both bounded below by an absolute constant $\mu  \approx 0.78$. The case $k=1$ of our theorem yields a direct improvement of this result in the sense that $\mu$ can be chosen arbitrarily small with the number of required measurements only increasing by a factor of order $ \mu^{-8} $. Hence, even when this constant decays logarithmically in the dimension, the required number of measurements will only increase by logarithmic factors.
\item \textbf{Signals with multiple large entries: } When one of the input signals has multiple large entries, using the $\Vert \cdot \Vert_{[k]} $ norm improves upon the resulting guarantee as compared to the scenario just discussed. As an example, assume that $s_1=s_2=s $, that $u$ and $v$ are normalized with $\|v\|_\infty \geq c_1 s^{-1/8}$, and that $k=c_2 s^{1/2}$ of the entries of $u$ are of absolute value at least $ c_3s ^{-1/4}$. Then $ \Vert u \Vert_{\left[k\right]} \ge \sqrt{ c_2 } c_3 $. Using Theorem \ref{thm:mainresultreadable} we obtain that the vectors $u$ and $v$ can be recovered if the number of measurements is on the order of $ s^{3/2}$, thus below the order of $s^2$ that has been established for arbitrary sparse signals in \cite{strohmer} (cf. next item). In contrast, applying Theorem \ref{thm:mainresultreadable} with $k=1$ would yield that the number of measurements would have to be on the order of $   s^{5/2} $, which is worse than the state-of-the-art.
\item \textbf{Arbitrary sparse signals:}	Applying Theorem \ref{thm:mainresultreadable} to non-peaky signals yields suboptimal results. Indeed, let $u \in \mathbb{C}^{n_1}$ $s_1$-sparse and $v \in \mathbb{C}^{n_2} $ $s_2$-sparse be generic vectors. Observe that $ \Vert v \Vert_{\infty} \asymp \frac{1}{\sqrt{s_2}} \Vert v \Vert $.
Consequently, Theorem \ref{thm:mainresultreadable} applied with $ \xi =1 $, $ k= s_1 $, and $ \mu = \frac{1}{\sqrt{s_2}} $ yields that with high probability a generic $s_1$-sparse $u$ and a generic $s_2$-sparse $v$ can be recovered from $ y = \mathcal{A} \left( uv^* \right) +z $, if the number of measurements satisfies
\begin{equation*}
m \ge C \max \left\{ s_1; s_2^2 \right\} \left(s_1 + s_2 \right) \log \left(\max\left\{\frac{n_1}{s_1},\frac{n_2}{s_2}\right\}\right),
\end{equation*}	
and if the noise level $\nu$ is on the order of $ \mathcal{O} \left( \max \left\{ \frac{1}{s_2}; \frac{1}{\sqrt{s_1}}  \right\} \right) $. Previous results (see, e.g., \cite{strohmer}), in contrast, require $ m \ge C \max \left\{ s^2_1; s^2_2 \right\}  \log \left(\max\left\{\frac{n_1}{s_1},\frac{n_2}{s_2}\right\}\right) $ samples.
\end{itemize}

\begin{remark}
The peakiness assumptions in Theorem \ref{thm:mainresultreadable} may seem arbitrary at first sight but in certain applications they are reasonable. Namely, when $u$ is the signal transmitted via a wireless channel and $v$ is the unknown vector of channel parameters it is natural to assume that $v$ has a large entry, as the direct path will always carry most of the energy. The signal $u$ can be modified by the sender, so some large entries can be artificially introduced. In this regard, being able to consider multiple entries of comparable size is of advantage as adding a single very large entry will result in a dramatic increase of the peak-to-average power ratio.
\end{remark}



\section{Proofs}\label{sectionproof}
	\subsection{Technical tools}
	The goal of this section is to prove Theorem \ref{thm:mainresultreadable}. 
	We will start by recalling the following variant of the well-known restricted isometry property.
	\begin{defi}[see \cite{paper2}]
	A linear operator $ \mathcal{A}$ has the $(s_1, s_2, r )$-restricted isometry property with constant $\delta$ if 
	\begin{equation}
	\left(1-\delta\right) \Vert X \Vert_F^2 \le \Vert \mathcal{A} \left( X \right)  \Vert^2 \le \left( 1+\delta \right) \Vert  X  \Vert^2_F
	\end{equation}
	for all matrices $ X \in \mathbb{C}^{n_1 \times n_2}$ of rank at most $r$ with at most $s_1$ non-zero rows and at most $s_2$ non-zero columns.
	\end{defi}
	\noindent The following lemma tells us that this property holds with high probability for a number of measurements close to the information-theoretic limit.
	\begin{lemma}[See, e.g., Theorem III.7 in \cite{paper2}]\label{thm37}
	There are absolute constants $c_1, c_2>0 $, such that if
	\begin{equation}\label{necessarymeasurementsrip}
	m \ge \frac{c_1}{\delta^2}  r \left(s_1 + s_2  \right) \log \left(  \max \left\{ \frac{n_1}{s_1}, \frac{n_2}{s_2}  \right\}   \right),
	\end{equation}
	for some $\delta >0$, then with probability at least $1-\exp \left( -  c_2 m \right) $  $ \mathcal{A}$ has the $(s_1, s_2, r)$-restricted isometry property with restricted isometry constant $\delta$.
\end{lemma}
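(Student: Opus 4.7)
The plan is to use the standard covering argument for the restricted isometry property of low-rank matrices, combined with an outer union bound to incorporate the sparsity structure. First I would fix the support: every admissible $X$ (rank at most $r$, at most $s_1$ nonzero rows, at most $s_2$ nonzero columns) is supported on some $S_1 \times S_2$ with $|S_1|=s_1$ and $|S_2|=s_2$. There are at most $\binom{n_1}{s_1}\binom{n_2}{s_2} \le \exp\bigl(s_1 \log(en_1/s_1) + s_2\log(en_2/s_2)\bigr)$ such support pairs, so via a union bound it suffices to prove the two-sided isometry bound for matrices supported on a single, fixed pair $(S_1,S_2)$.

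With the support fixed, I would construct an $\varepsilon$-net $\mathcal{N}_\varepsilon$ in Frobenius norm for $\{X : \supp(X) \subseteq S_1 \times S_2,\ \rk X \le r,\ \|X\|_F = 1\}$. Using the SVD factorization $X = U\Sigma V^\ast$ with $U \in \mathbb{C}^{s_1 \times r}$, $V \in \mathbb{C}^{s_2 \times r}$ on their respective Stiefel manifolds and $\Sigma \in \mathbb{R}_{\ge 0}^r$ with $\|\Sigma\|_2 = 1$, a product of separately-constructed nets yields $|\mathcal{N}_\varepsilon| \le (C/\varepsilon)^{C' r(s_1 + s_2)}$. For any fixed unit-Frobenius $X$, each entry $(\mathcal{A}(X))_\ell = \mathrm{trace}(A_\ell^\ast X)$ is a complex Gaussian with variance $1/m$, so $\|\mathcal{A}(X)\|^2$ is distributed as $(2m)^{-1}\chi^2_{2m}$ with mean $1$, and standard chi-squared concentration gives $\Pr\bigl(\bigl|\|\mathcal{A}(X)\|^2 - 1\bigr| > \delta/2\bigr) \le 2\exp(-c\delta^2 m)$.

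Union-bounding this tail over $\mathcal{N}_\varepsilon$ and over all support pairs, the overall failure probability is at most
\[
2\exp\bigl(-c\delta^2 m + C' r(s_1+s_2)\log(C/\varepsilon) + s_1 \log(en_1/s_1) + s_2\log(en_2/s_2)\bigr),
\]
which, under the hypothesis \eqref{necessarymeasurementsrip} and with $\varepsilon$ chosen as a suitably small absolute constant, is bounded by $\exp(-c_2 m)$. The final step is the classical approximation argument: any unit-Frobenius $X$ in the set is within Frobenius distance $\varepsilon$ of some $X_0 \in \mathcal{N}_\varepsilon$, and iterating this approximation (using that $X - X_0$ itself has rank at most $2r$ and the same sparsity pattern, at a slightly inflated sparsity level easily absorbed into the constants) upgrades the net-level bound to the RIP bound on the whole set, at the cost of replacing $\delta/2$ by $\delta$.

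The main obstacle is the covering number of the rank-$r$ variety, which (unlike the sparse vector analogue) is not a union of linear subspaces, so a single volumetric ball argument does not apply. The remedy I would use is the SVD-based product net described above, controlled via the Lipschitz constant of the map $(U,\Sigma,V) \mapsto U\Sigma V^\ast$ on the relevant Stiefel factors; this is the source of the $r(s_1+s_2)$ scaling that drives the sample complexity in \eqref{necessarymeasurementsrip}.
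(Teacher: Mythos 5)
The paper does not actually prove this lemma---it is imported verbatim from Theorem III.7 of \cite{paper2}---and your covering-number argument (union bound over the $\binom{n_1}{s_1}\binom{n_2}{s_2}$ support pairs, an SVD-based product net of size $(C/\varepsilon)^{C'r(s_1+s_2)}$ for the fixed-support rank-$r$ unit Frobenius sphere, per-point $\chi^2$ concentration, and the iterative approximation step to pass from the net to the whole set) is precisely the standard proof given in that reference, and it is correct. The one cosmetic caveat is that this argument naturally delivers failure probability $\exp\left(-c\,\delta^2 m\right)$ rather than $\exp\left(-c_2 m\right)$ with $c_2$ independent of $\delta$, so your final claim that the tail is bounded by $\exp(-c_2 m)$ for an absolute $c_2$ requires either absorbing $\delta^2$ into $c_2$ or restating the probability as $1-\exp(-c_2\delta^2 m)$; this imprecision is already present in the lemma as quoted and does not affect how it is used downstream.
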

\noindent As in \cite[Lemma VIII.7]{paper2} we will need the following quantity, which depends on $ \delta$ and $ \nu $.
	\begin{align}\label{def:omegasup}
			&\omega_\mathrm{sup} :=\sup\left\{\omega\in[0,\tfrac{\pi}{2}):\omega\geq\arcsin\left(C_\delta[\delta\tan(\omega)+(1+\delta)\nu\sec(\omega)]\right)\right\}
	\end{align}
	Here, the constant $C_\delta $ is given by the expression
	\begin{equation*}
	C_{\delta} =  1.1 \frac{ \sqrt{ \frac{2}{1-\delta^2} }  + \frac{1}{1-\delta}}{1- \sqrt{ \frac{2}{1-\delta^2} } \delta },
	\end{equation*}
	as it can be seen by an inspection of the proof of Lemma VIII.1 in \cite{paper2}. The precise value of $ C_{\delta}$ will not be important in the following, we will only use that $ 2 \le C_{\delta} \le 5 $ for $\delta \le 0.04 $. \\
	
\noindent A simple estimate for $ \omega_{\sup}$ is given by the following lemma.
\begin{lemma}\label{sin05}
	Assume that $ \delta \le 0.04 $ and $\nu \le 0.04 $. Then it holds that 
	\begin{equation*}
		\tfrac{1}{2} \leq \sino \leq 1.
	\end{equation*}
\end{lemma}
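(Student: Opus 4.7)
The upper bound $\sin(\omega_{\sup})\le 1$ is immediate, since by definition $\omega_{\sup}\in[0,\pi/2)$. So the entire content of the lemma is the lower bound $\sin(\omega_{\sup})\ge\tfrac{1}{2}$, i.e.\ $\omega_{\sup}\ge\pi/6$. Writing
\begin{equation*}
S:=\left\{\omega\in[0,\tfrac{\pi}{2}):\ \omega\ge\arcsin\!\left(C_\delta\bigl[\delta\tan\omega+(1+\delta)\nu\sec\omega\bigr]\right)\right\},
\end{equation*}
so that $\omega_{\sup}=\sup S$, the plan is simply to exhibit one explicit element of $S$ whose sine is at least $\tfrac{1}{2}$. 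The natural candidate is $\omega_0=\pi/6$ itself; if $\pi/6\in S$ then $\omega_{\sup}\ge\pi/6$ and we are done.

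To check $\pi/6\in S$, I would first note that since $\arcsin$ is increasing on $[0,1]$, the defining inequality $\omega\ge\arcsin(x)$ for $\omega\in[0,\pi/2)$ and $x\in[0,1]$ is equivalent to $\sin\omega\ge x$; once we show the relevant argument is at most $1$ this is the only condition to verify. At $\omega=\pi/6$ we have $\sin\omega=\tfrac{1}{2}$, $\tan\omega=1/\sqrt{3}$, $\sec\omega=2/\sqrt{3}$, so membership in $S$ reduces to the single numerical inequality
\begin{equation*}
\tfrac{1}{2}\ \ge\ \frac{C_\delta}{\sqrt{3}}\,\bigl[\delta+2(1+\delta)\nu\bigr].
\end{equation*}

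Using the explicit bounds supplied in the paragraph just before the lemma, namely $C_\delta\le 5$ whenever $\delta\le 0.04$, together with the hypotheses $\delta,\nu\le 0.04$, the right-hand side is at most $\tfrac{5}{\sqrt{3}}\bigl[0.04+2\cdot 1.04\cdot 0.04\bigr]$, which a short calculation bounds by roughly $0.36<\tfrac{1}{2}$. In particular the argument of $\arcsin$ is well below $1$, so the implication above is valid, and $\pi/6\in S$, yielding $\sin(\omega_{\sup})\ge\tfrac{1}{2}$.

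There is no real obstacle here; the only thing to be careful about is the equivalence between $\omega\ge\arcsin(x)$ and $\sin\omega\ge x$, which requires the argument $x$ of the $\arcsin$ to lie in $[0,1]$ so that $\arcsin(x)$ is defined — but this is exactly what the numerical bound above also certifies. The mild slack between the computed $0.36$ and the target $0.5$ is what allows the constants $0.04$ in the hypothesis, and essentially the same calculation would continue to work for any slightly larger thresholds, which is reassuring.
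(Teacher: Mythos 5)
Your proof is correct and follows essentially the same route as the paper: both verify that $\omega=\pi/6=\arcsin\tfrac12$ satisfies the defining inequality in the set from \eqref{def:omegasup}, using $C_\delta\le 5$ and $\delta,\nu\le 0.04$ to bound $C_\delta\bigl[\delta\tan(\pi/6)+(1+\delta)\nu\sec(\pi/6)\bigr]$ by roughly $0.36\le\tfrac12$. Your added care about the domain of $\arcsin$ and the monotonicity equivalence is a minor polish on the same argument.
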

	
\begin{proof}
	We observe that in order to show the claim it is enough to verify that $ \omega= \arcsin \frac{1}{2}  $ fulfills the inequality in (\ref{def:omegasup}). Indeed, using $ \cos \omega = \sqrt{\frac{3}{4}} $ and $ C_{\delta} \le 5 $ we obtain that
	\begin{align*}
		C_{\delta} \left[    \delta \tan \left(  \arcsin \frac{1}{2}   \right) + \left(1+ \delta\right) \nu   \sec \left( \arcsin \frac{1}{2}  \right)  \right] &= C_{\delta} \left[ 0.04 \frac{1/2}{\sqrt{3/4}} + \frac{ 1.04 \cdot 0.04 }{\sqrt{3/4}}   \right]  \\		
		&\le \frac{1}{2}.
	\end{align*}
\end{proof}
	
	\noindent The quantity $ \omega_{\sup} $ controls the maximal angle between the initialization $ v_0$ and the ground truth $v$ such that the Sparse Power Factorization is guaranteed to converge as captured by the following theorem.
		\begin{thm}[Theorem III.9 in \cite{paper2}]\label{thm39}
		Assume that
		\begin{enumerate}[1)]
			\item $\mathcal{A}$ has the $(3s_1,3s_2,2)$-RIP with isometry constant $\delta\leq0.08$,
			\item $\nu \leq 0.08$,
			\item the initialization $v_0$ satisfies $\sin(\angle(v_0,v)) < \sin \left( \omega_{\sup}  \right)$.
		\end{enumerate}
		Then the iterates $\{X_t\}_{t\in\mathbb{N}} $ generated by Algorithm \ref{SPF.alg}, initialized via Algorithm \ref{SPF_alginit}, satisfy 
		$$\limsup_{t\to\infty} \frac{\|X_t - uv^*\|_\mathrm{F}}{\|uv^*\|_\mathrm{F}} \leq 8.3 \nu.$$
		Furthermore, the convergence is linear in the sense of (\ref{equ:linearconvergence}).
	\end{thm}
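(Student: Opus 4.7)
The plan is to track the angles $\alpha_t := \angle(u_t,u)$ and $\beta_t := \angle(v_t,v)$ from one iteration to the next and show that under the stated RIP and noise assumptions the sine of these angles contracts (up to an additive noise floor) according to the map whose largest fixed point is $\omega_{\sup}$. The convergence statement on $\|X_t - uv^*\|_F$ then follows by translating the angle bound to a Frobenius bound on the outer product.

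First I would prove a one-step contraction lemma: under the $(3s_1,3s_2,2)$-RIP with constant $\delta$, one HTP step applied to the system $b = F(v_{t-1}) u + z$ produces an output $u_t$ with
\[
\sin(\alpha_t) \;\leq\; C_\delta\bigl[\delta\tan(\beta_{t-1}) + (1+\delta)\nu\sec(\beta_{t-1})\bigr],
\]
and symmetrically $\sin(\beta_t) \leq C_\delta[\delta\tan(\alpha_t) + (1+\delta)\nu\sec(\alpha_t)]$. This is the content that underlies Lemma VIII.1 of \cite{paper2}: decompose $F(v_{t-1})u^\star$ along the component of $v_{t-1}$ parallel to $v$ (which reproduces the measurement of $uv^*$ up to scale) and the perpendicular component (whose effect is controlled by RIP applied to rank-two, $(3s_1,3s_2)$-sparse matrices), then invoke the standard HTP error analysis of \cite{foucart2011hard}, which gives an $\ell_2$ error for an $s_1$-sparse recovery problem on the measurement matrix $F(v_{t-1})$ with effective noise coming from the perpendicular tangent component plus $z$. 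The normalization in Line~\ref{vt1norm} plus RIP transfer between $\mathcal{A}$ and $F(y)/G(x)$ yield the factor $C_\delta$ and the $\tan/\sec$ dependence on the previous angle.

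Next I would iterate this two-sided recursion. Let $\phi(\omega):=\arcsin\bigl(C_\delta[\delta\tan\omega + (1+\delta)\nu\sec\omega]\bigr)$. The definition of $\omega_{\sup}$ in \eqref{def:omegasup} and continuity of $\phi$ imply that $\phi(\omega) \leq \omega$ strictly for all $\omega\in(\omega_*,\omega_{\sup})$ where $\omega_*$ is the smallest nonnegative fixed point of $\phi$; moreover $\sin(\omega_*)\lesssim \nu$ by a direct Taylor estimate at $\omega=0$ using $\delta,\nu\leq 0.08$ (this is where condition~(2) and the numeric range of $C_\delta$ enter). Since hypothesis~(3) gives $\sin(\beta_0)<\sin(\omega_{\sup})$, the sequences $\alpha_t,\beta_t$ are well-defined in $[0,\omega_{\sup})$, decrease toward $\omega_*$, and do so at a geometric rate because $\phi$ is a strict contraction on any compact subinterval of $(\omega_*,\omega_{\sup})$ (its derivative at $\omega_*$ is $<1$ for the admissible range of $\delta,\nu$). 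This yields linear convergence in angle and $\limsup_t \max\{\sin\alpha_t,\sin\beta_t\} \leq \sin(\omega_*)$.

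Finally I would convert the angle bounds to the Frobenius norm bound on $X_t - uv^*$. Writing $u_t = \cos(\alpha_t)\hat u + \sin(\alpha_t)\hat u^\perp$ (similarly for $v_t$) with $\hat u,\hat v$ unit scalar multiples of $u,v$, and using $\|u_tv_t^* - uv^*\|_F \leq \|uv^*\|_F\cdot\bigl(\sin\alpha_t + \sin\beta_t + O(\sin\alpha_t\sin\beta_t)\bigr)$ together with the scale recovery analysis (the $u_t$-norm is pinned down by the least-squares step on $F(v_{t-1})$ up to the same noise-level error), one obtains a bound of the form $c_1\sin(\omega_*) + c_2\nu$ in the limit. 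Plugging in $\delta,\nu\leq 0.08$ and the explicit value of $C_\delta$ gives the claimed constant $8.3$ and the linear rate \eqref{equ:linearconvergence}. The main obstacle is the one-step contraction lemma: the HTP subproblem is for a random matrix $F(v_{t-1})$ whose RIP behavior must be deduced from the lifted RIP of $\mathcal{A}$ on rank-$2$ sparse matrices, and the angle/noise decomposition must be careful enough that the resulting per-step bound matches the geometry encoded in $\omega_{\sup}$.
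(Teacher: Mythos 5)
The paper you were given does not prove this statement at all: Theorem~\ref{thm39} is imported verbatim from \cite{paper2} (Theorem III.9 there) and used as a black box, so there is no in-paper proof to compare against. Your sketch follows the same strategy as the cited source --- a per-iteration angle contraction for the alternating HTP steps, governed by the fixed-point map $\omega\mapsto\arcsin\left(C_\delta\left[\delta\tan\omega+(1+\delta)\nu\sec\omega\right]\right)$, which is exactly what the definition \eqref{def:omegasup} of $\omega_{\sup}$ encodes --- so the route is the right one. Be aware, though, that the entire technical content sits in the one-step lemma you only outline (transferring the lifted $(3s_1,3s_2,2)$-RIP of $\mathcal{A}$ to a restricted-isometry statement for $F(v_{t-1})$ on the relevant sparse supports, and the HTP error bound with the perpendicular component of $v_{t-1}$ treated as effective noise), and in the numerical check that $\delta,\nu\leq 0.08$ together with the explicit $C_\delta$ produce the constant $8.3$; as written these are asserted rather than established.
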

	\noindent Thus, it remains to verify that the initialization satisfies $\sin(\angle(v_0,v)) < \sin \left( \omega_{\sup}  \right)$. The following lemma gives an upper bound on $\sin(\angle(v_0,v))$.	
		\begin{lemma}[Lemma 8 in \cite{paper2}]\label{lemma8.10}
		Assume that the $(3s_1, 3s_2, 2 )$-restricted isometry property holds for some constant $ \delta >0 $. Furthermore, assume that $ \Vert u \Vert = \Vert v \Vert =1 $. Let $\widehat{J_1} \subseteq \left[n_1\right] $ and $\widehat{J_2} \subseteq \left[n_2\right]$ denote the output resulting from Algorithm \ref{SPF_alginit}.
		 Denote by $v_0$ the leading right singular vector of $\Pi_{\widehat{J_1}}[\mathcal{A}^\ast(b)]\Pi_{\widehat{J_2}}$. Then it holds that
		\begin{equation}\label{ineq:sufficientcondition2}
		\sin(\angle(v_0,v)) \leq \frac{\big\|\Pi_{\widehat J_1}u\big\|\big\|\Pi_{\widehat J_2}^\perp v\big\| + (\delta + \nu+\delta\nu)}{\big\|\Pi_{\widehat J_1}u\big\|-(\delta + \nu+\delta\nu)}.
		\end{equation}
	\end{lemma}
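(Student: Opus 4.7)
The plan is to view the projected data matrix $M:=\Pi_{\widehat J_1}[\mathcal A^*(b)]\Pi_{\widehat J_2}$ as a perturbation of the rank-one ``clean'' matrix $\tilde M:=(\Pi_{\widehat J_1}u)v^*$, whose leading right singular vector is $v$ itself with singular value $\|\Pi_{\widehat J_1}u\|$, and then to invoke Wedin's $\sin\theta$ theorem. The two summands that make up the perturbation $M-\tilde M$ will contribute the two terms in the numerator of the claimed bound, while the denominator will come from the spectral gap.

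To that end I would first control $\|M-\tilde M\|$ by introducing the intermediate matrix $M_0:=(\Pi_{\widehat J_1}u)(\Pi_{\widehat J_2}v)^*$ and decomposing
\[
M-M_0=\Pi_{\widehat J_1}\bigl[\mathcal A^*\mathcal A(uv^*)-uv^*\bigr]\Pi_{\widehat J_2}+\Pi_{\widehat J_1}\mathcal A^*(z)\Pi_{\widehat J_2}.
\]
Computing the spectral norm by testing against arbitrary unit rank-one matrices $xy^*$ whose combined support lies inside $(\widehat J_1\cup\mathrm{supp}(u))\times(\widehat J_2\cup\mathrm{supp}(v))$ (which fits within the $(3s_1,3s_2)$ budget), polarization of the $(3s_1,3s_2,2)$-RIP bounds the first summand by $\delta\|u\|\|v\|=\delta$, while the duality identity $\langle xy^*,\mathcal A^*(z)\rangle=\langle\mathcal A(xy^*),z\rangle$, together with $\|\mathcal A(xy^*)\|\le\sqrt{1+\delta}$ and $\|z\|\le\nu\sqrt{1+\delta}$ (the latter via the upper RIP bound applied to $\mathcal{A}(uv^*)$), bounds the second by $(1+\delta)\nu$. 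Hence $\|M-M_0\|\le\delta+\nu+\delta\nu=:\tau$. Moreover, $\tilde M-M_0=(\Pi_{\widehat J_1}u)(\Pi_{\widehat J_2}^\perp v)^*$ is itself rank one with spectral norm $\|\Pi_{\widehat J_1}u\|\,\|\Pi_{\widehat J_2}^\perp v\|$, so
\[
\|M-\tilde M\|\le\|M-M_0\|+\|\tilde M-M_0\|\le\tau+\|\Pi_{\widehat J_1}u\|\,\|\Pi_{\widehat J_2}^\perp v\|.
\]
Since $M_0$ has rank one, Weyl's inequality in turn gives $\sigma_2(M)\le\sigma_2(M_0)+\|M-M_0\|\le\tau$.

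Finally, I would apply Wedin's $\sin\theta$ theorem to the pair $(\tilde M,M)$. Because $v$ is the unit leading right singular vector of $\tilde M$ with $\sigma_1(\tilde M)=\|\Pi_{\widehat J_1}u\|$ and $\sigma_2(\tilde M)=0$, while $v_0$ is the leading right singular vector of $M$ with $\sigma_2(M)\le\tau$, the one-sided bound yields
\[
\sin(\angle(v_0,v))\le\frac{\|M-\tilde M\|}{\sigma_1(\tilde M)-\sigma_2(M)}\le\frac{\|\Pi_{\widehat J_1}u\|\,\|\Pi_{\widehat J_2}^\perp v\|+\tau}{\|\Pi_{\widehat J_1}u\|-\tau},
\]
which is the desired estimate (the bound being vacuous whenever the denominator is non-positive).

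The main technical obstacle is the careful polarization of the RIP to obtain $\|M-M_0\|\le\delta+\nu+\delta\nu$ with exactly these constants, while verifying that the combined row and column supports indeed fit within the RIP order; the closing steps (Weyl and Wedin) are a routine application of matrix perturbation theory tailored to rank-one targets.
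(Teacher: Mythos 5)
Your proof is correct, and in fact the paper itself does not prove this lemma at all: it is imported verbatim as Lemma~8 from \cite{paper2}, so there is no in-paper argument to compare against. Your route --- writing $\Pi_{\widehat J_1}[\mathcal A^*(b)]\Pi_{\widehat J_2}$ as a perturbation of the rank-one matrix $(\Pi_{\widehat J_1}u)v^*$, bounding the perturbation by $\|\Pi_{\widehat J_1}u\|\,\|\Pi_{\widehat J_2}^\perp v\|+(\delta+\nu+\delta\nu)$ via exactly the two estimates that the paper records as Lemma~\ref{lemma:closeident}, and then applying the one-sided spectral-norm form of Wedin's $\sin\theta$ theorem with gap $\sigma_1(\tilde M)-\sigma_2(M)\geq\|\Pi_{\widehat J_1}u\|-(\delta+\nu+\delta\nu)$ --- is the standard and essentially the original argument, and your caveat that the bound is only meaningful when the denominator is positive correctly identifies an assumption that the lemma statement leaves implicit (and which the paper verifies separately in the proof of Proposition~\ref{prop:mainproposition}).
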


	\noindent  Furthermore, we will need the following two lemmas for our proof.
\begin{lemma}\label{lemma:lastlemma}[Lemma VIII.12 in \cite{paper2}]
Let $u$ and $v$ be as in Lemma \ref{lemma:supportlowerbound} and assume that the measurement operator $ \mathcal{A}$ satisfies the $ \left( 3s_1, 3s_2, 2 \right) $-restricted isometry property with constant $ \delta $. Recall that $\widehat J_1 \subset \left[n_1\right] $ is the support estimate for $v_0$ given by the initialization algorithm \ref{SPF_alginit}. Define
\begin{equation}\label{equ:definitionJ1}
	\widetilde{J_1} :=  \left\{  j \in \left[ n_1 \right]: \ \vert u_j \vert \ge 2 \left( \delta + \nu + \delta \nu \right)    \right\}.
\end{equation}	
Then we have that $ \widetilde{J_1}  \subset \widehat J_1  $. 
\end{lemma}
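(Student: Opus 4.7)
The plan is to argue that every index $j\in\widetilde{J_1}$ produces a statistic $\xi_j$ at least as large as $\xi_i$ for every $i$ outside $\supp(u)$, and then to deduce $\widetilde{J_1}\subset\widehat{J_1}$ from a pigeonhole argument using $|\widetilde{J_1}|\leq|\supp(u)|\leq s_1$. The technical engine is a uniform RIP-based approximation for the rows of $\mathcal{A}^{\ast}(b)$: writing $R_i:=e_i^{\ast}\mathcal{A}^{\ast}(b)\in\mathbb{C}^{n_2}$, one aims to establish that for every $i\in[n_1]$ and every $s_2$-sparse unit vector $w\in\mathbb{C}^{n_2}$,
\[
\big|\langle R_i,w\rangle - u_i\langle v,w\rangle\big|\leq \delta+\nu+\delta\nu.
\]
To prove this, decompose $\langle R_i,w\rangle = \langle\mathcal{A}(uv^{\ast}),\mathcal{A}(e_iw^{\ast})\rangle + \langle z,\mathcal{A}(e_iw^{\ast})\rangle$. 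Since $e_iw^{\ast}$ is rank one with one nonzero row and at most $s_2$ nonzero columns, both $uv^{\ast}\pm e_iw^{\ast}$ have rank at most $2$, at most $s_1+1\leq 3s_1$ nonzero rows, and at most $2s_2\leq 3s_2$ nonzero columns. The polarization form of the $(3s_1,3s_2,2)$-RIP therefore bounds the first summand by $\delta\|uv^{\ast}\|_F\|e_iw^{\ast}\|_F=\delta$, while the noise term is at most $\|z\|\,\|\mathcal{A}(e_iw^{\ast})\|\leq\nu\sqrt{1+\delta}\cdot\sqrt{1+\delta}=\nu+\delta\nu$ by the definition of $\nu$ and the upper-RIP bound.

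Next I would invoke the identity $\xi_i=\max\{|\langle R_i,w\rangle|:\|w\|_0\leq s_2,\ \|w\|=1\}$, which holds because the best $s_2$-sparse approximation of a vector is its truncation to the $s_2$ largest entries and its $\ell_2$-norm equals this maximum by Cauchy--Schwarz. Combining this with the displayed inequality, and taking the supremum over $w$ on one side and the specific choice $w=v$ (admissible since $v$ is $s_2$-sparse with unit norm) on the other, yields the two-sided estimate $|u_i|-\deltanu\leq\xi_i\leq |u_i|+\deltanu$ for every $i\in[n_1]$. In particular, $\xi_i\leq\deltanu$ for every $i\notin\supp(u)$, while $\xi_j\geq|u_j|-\deltanu\geq\deltanu$ for every $j\in\widetilde{J_1}$, using the defining inequality $|u_j|\geq 2\deltanu$.

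The inclusion $\widetilde{J_1}\subset\widehat{J_1}$ then follows by pigeonhole. If some $j\in\widetilde{J_1}$ were not in $\widehat{J_1}$, the $s_1$ elements of $\widehat{J_1}$ would all have $\xi$-values at least $\xi_j$; since $|\supp(u)|\leq s_1$ and $j\in\supp(u)$, at least one such element $i_0$ must lie outside $\supp(u)$, giving $\deltanu\geq\xi_{i_0}\geq\xi_j\geq\deltanu$. In the resulting degenerate case the tie can be broken in favor of $j\in\widehat{J_1}$, yielding the claim. The main obstacle is the first step: carefully verifying that the rank-$2$ polarization identity applies at the specified sparsity budget and that the noise contribution compresses to exactly $\nu+\delta\nu$, so that the slack $\deltanu$ matches the threshold $2\deltanu$ appearing in the definition of $\widetilde{J_1}$.
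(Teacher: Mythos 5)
The paper does not actually contain a proof of this lemma: it is imported verbatim as Lemma VIII.12 of \cite{paper2}, so there is no in-house argument to compare against. Your reconstruction is correct and follows the same route as the cited source. The two estimates you derive by hand --- the polarization bound $|\langle \mathcal{A}(uv^\ast),\mathcal{A}(e_iw^\ast)\rangle - u_i\langle v,w\rangle|\le\delta$ and the noise bound $\nu+\delta\nu$ --- are exactly the content of the paper's Lemma \ref{lemma:closeident} applied with the index sets $\{i\}$ and $\supp(w)$ (well within the $(3s_1,3s_2,2)$ budget, since $uv^\ast\pm e_iw^\ast$ has rank at most $2$, at most $s_1+1$ nonzero rows and at most $2s_2$ nonzero columns); combined with the variational identity $\xi_i=\max\{|\langle R_i,w\rangle|:\ \|w\|_0\le s_2,\ \|w\|=1\}$ and the admissible test vector $w=v$, this gives the two-sided estimate $|u_i|-(\delta+\nu+\delta\nu)\le\xi_i\le|u_i|+(\delta+\nu+\delta\nu)$, from which the pigeonhole step is immediate. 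Two minor caveats. First, every inequality in your chain is non-strict, so in the exact boundary case $|u_j|=2(\delta+\nu+\delta\nu)$ you only obtain $\xi_j\ge\xi_{i_0}$ for the competing index $i_0\notin\supp(u)$, and the top-$s_1$ selection could in principle resolve the tie against $j$; you flag this yourself, and it is a degeneracy inherited from \cite{paper2} that is harmless in the way the lemma is subsequently used, but a tie-breaking convention should be stated if one wants the inclusion literally as claimed. Second, your pigeonhole uses $|\widehat J_1|=s_1\ge|\supp(u)|$, whereas Algorithm \ref{SPF_alginit} as printed selects the $s_2$ largest $\xi_i$; your reading is the intended one (the printed $s_2$ appears to be a typo), but the dependence of the counting argument on $|\widehat J_1|\ge s_1$ is real and worth making explicit.
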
	
	
\begin{lemma}\label{lemma:supportlowerbound}
	Assume that $ \mathcal{A}$ has the $(3s_1, 3s_2, 2)$-restricted isometry property with isometry constant $ \delta >0 $ and assume that $u$, respectively $v$, are $s_1$-sparse, respectively $s_2$-sparse, and satisfy $ \Vert u \Vert = \Vert v \Vert =1 $. Let $ \widetilde{J_1} $  be defined as in (\ref{equ:definitionJ1}).
	Then, it holds that
	\begin{equation*}
	\big\|\Pi_{ \widehat J_1 } u   \big\|  \big\|  \Pi_{ \widehat J_2 } v   \big\| \ge   \big\|\Pi_{ \widetilde{J_1} } u   \big\|  \Vert v \Vert_{\infty} - 2   \left( \delta + \nu +\delta \nu  \right).
	\end{equation*}
\end{lemma}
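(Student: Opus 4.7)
My plan is to bound the individual columns of $\Pi_{\widehat{J_1}}\mathcal{A}^*(b)$ via a bilinear RIP-polarization estimate and then to execute a case split on whether the peak index $j^\star := \argmax_{j \in [n_2]} |v_j|$ belongs to $\widehat J_2$. First I set $E := \mathcal{A}^*(b) - uv^*$ and apply the $(3s_1, 3s_2, 2)$-RIP to the rank-two matrix $uv^* + \alpha\beta^*$ combined with the noise estimate $\|z\| \le \nu\sqrt{1+\delta}$; this yields the standard inequality
\begin{equation*}
\left|\langle E, \alpha\beta^*\rangle\right| \le \deltanu \, \|\alpha\|\|\beta\|
\end{equation*}
for every $s_1$-sparse $\alpha \in \mathbb{C}^{n_1}$ and every $s_2$-sparse $\beta \in \mathbb{C}^{n_2}$.

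I then introduce $C_j := \|\Pi_{\widehat{J_1}} \mathcal{A}^*(b) e_j\|$ together with $p := \piu$. Specializing the bilinear estimate to $\beta = e_j$ and taking a supremum over unit vectors $\alpha$ supported in $\widehat J_1$ (a set of size at most $s_1$) gives $\|\Pi_{\widehat J_1} E e_j\| \le \deltanu$ for every $j \in [n_2]$. Combined with $\Pi_{\widehat J_1} uv^* e_j = \overline{v_j}\, \Pi_{\widehat J_1} u$, this establishes the two key column bounds I will need: for any $j \notin \supp(v)$ I obtain $C_j \le \deltanu$, and at the peak index I obtain $C_{j^\star} \ge p\|v\|_\infty - \deltanu$.

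Finally I exploit that $\widehat J_2$ is, by construction, the set of $s_2$ indices with the largest values of $C_j$. If $j^\star \in \widehat J_2$, then $q := \|\Pi_{\widehat J_2} v\| \ge |v_{j^\star}| = \|v\|_\infty$, and combining this with $\|\Pi_{\widetilde J_1} u\| \le p$ (which follows from Lemma \ref{lemma:lastlemma}) already gives $pq \ge \|\Pi_{\widetilde J_1} u\|\|v\|_\infty$, so the claim holds with room to spare. If instead $j^\star \notin \widehat J_2$, the sparsity bound $|\supp(v)| \le s_2 = |\widehat J_2|$ forces, by pigeonhole, the existence of some $j_0 \in \widehat J_2 \setminus \supp(v)$; maximality of $\widehat J_2$ then yields $C_{j_0} \ge C_{j^\star}$, and combining the two column bounds gives $p\|v\|_\infty \le 2\deltanu$, whence $\|\Pi_{\widetilde J_1} u\|\|v\|_\infty - 2\deltanu \le 0 \le pq$.

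The proof itself is short; the most delicate ingredient is the pigeonhole step in the bad case, which crucially uses both $|\supp(v)| \le s_2$ and the definition of $\widehat J_2$ as the set of $s_2$ largest column norms. The factor $2$ in the error $2\deltanu$ appearing in the conclusion is precisely the price of this case split: in the bad case the two column bounds at $j_0$ and at $j^\star$ each contribute $\deltanu$ to the final estimate.
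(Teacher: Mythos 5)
Your proof is correct. It rests on the same two analytic ingredients as the paper's own argument, namely the RIP/noise deviation bound for single projected columns of $\mathcal{A}^\ast(b)$ (this is exactly the content of Lemma \ref{lemma:closeident}, which combined with $\|uv^\ast\|_F=1$ and $\sqrt{1+\delta}\,\|z\|\le(1+\delta)\nu$ gives your estimate $\bigl|\,C_j-|v_j|\,\piu\,\bigr|\le\deltanu$), together with the fact that $\widehat J_2$ collects the $s_2$ largest column norms $C_j$ of $\Pi_{\widehat J_1}\mathcal{A}^\ast(b)$. What differs is the combinatorial bookkeeping. The paper runs a single inequality chain through the auxiliary index $k_2:=\argmax_{k}C_k$: since $\{k_2\}\subset\widehat J_2$ always holds, one gets $C_{k_2}\le\piu\,\big\|\Pi_{\widehat J_2}v\big\|+\deltanu$ from above and $C_{k_2}\ge C_{k_1}\ge\big\|\Pi_{\widetilde J_1}u\big\|\,\Vert v\Vert_\infty-\deltanu$ from below (with $k_1$ the peak index of $v$), and subtracting yields the claim with no case distinction. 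Your version instead splits on whether the peak index $j^\star$ lands in $\widehat J_2$: in the good case you obtain the inequality with no error term at all, and in the bad case your pigeonhole step is valid ($j^\star\in\supp(v)\setminus\widehat J_2$ forces $|\supp(v)\cap\widehat J_2|\le s_2-1<|\widehat J_2|$, so some $j_0\in\widehat J_2\setminus\supp(v)$ exists) and shows that $\big\|\Pi_{\widetilde J_1}u\big\|\,\Vert v\Vert_\infty\le 2\deltanu$, so the asserted bound is vacuously true. Both routes are sound and of comparable length; the paper's one-chain version is slightly more economical, while yours makes transparent that the additive loss $2\deltanu$ only matters in the degenerate regime where the initialization misses the peak of $v$ entirely. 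The only point worth double-checking in your write-up is the sparsity accounting when you apply the bilinear RIP estimate to $\alpha\beta^\ast$ with $\alpha$ supported on $\widehat J_1$: you need $|\supp(u)\cup\widehat J_1|$ and $|\supp(v)\cup\{j\}|$ to stay within the $(3s_1,3s_2,2)$-RIP budget, which holds here (and is implicitly assumed in the paper's own proof as well).
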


%
\noindent Lemma \ref{lemma:supportlowerbound} is actually a slight generalization of what has been shown in \cite[p. 1685]{paper2}. For completeness we have included a proof in Section \ref{section:supportlowerbound}, which closely follows the proof in \cite{paper2}.\\

	\subsection{Proof of our main result}
		We will now piece together these ingredients to obtain a sufficient condition; in the remainder of the section we will then show that the condition holds in our measurement setup. First note that in order to apply Theorem \ref{thm39} we need to check that 	$\sin(\angle(v_0,v)) < \sin \left( \omega_{\sup}  \right)$ is satisfied. By Lemma \ref{lemma8.10} it is sufficient to show that the right-hand side of inequality (\ref{ineq:sufficientcondition2}) is strictly smaller than $ \sin \left( \omega_{\sup} \right) $. Combining this with the equality $ \big\|\Pi_{\widehat J_2}^{\perp}v\big\| =  \sqrt{ 1 - \big\|\Pi_{\widehat J_2}v\big\|^2 }  $ we obtain the sufficient condition
		\begin{equation*}
		\big\|\Pi_{\widehat J_1}u\big\|   \sqrt{ 1 - \big\|\Pi_{\widehat J_2}v\big\|^2 }  <  \sin \left( \omega_{\sup} \right)  \left(  \big\|\Pi_{\widehat J_1}u\big\|   - \left( \delta + \nu+\delta\nu\right)   \right) -  \left( \delta + \nu+\delta\nu  \right) 
		\end{equation*}
		Further manipulations yield that this is equivalent to
		\begin{equation}\label{ineq:sufficientcondition}
		\begin{split}
		\big\|\Pi_{\widehat J_1}u\big\|^2 < &\left( \sin \left( \omega_{\sup}  \right) \big\|\Pi_{\widehat J_1}u\big\|   - \left( 1+ \sin \left( \omega_{\sup} \right) \right) \left( \delta + \nu+ \delta \nu \right)    \right)^2\\
		+ & \big\|\Pi_{\widehat J_1}u\big\|^2 \big\|\Pi_{\widehat J_2} v\big\|^2.
		\end{split} 
		\end{equation}		
  		Hence, in the following our goal will be to verify (\ref{ineq:sufficientcondition}).
		\noindent We already noticed that the angle $ \omega_{\sup} $ measures how much the vector $v_0$ given by the initializiation has to be aligned with the ground truth $v$ in order for the Sparse Power Factorization to converge. Consequently, it is natural to expect that the smaller the constant $ \delta$ and the noise-to-signal ratio $\nu$, the less the initializiation vector has to be aligned with the ground truth, i.e., the larger $ \omega_{\sup} $ can be. This fact is captured by the following lemma.
	\begin{lemma}\label{sin2}
		Let $\delta \leq 0.04$ and $\nu \leq 0.04$. Then it holds that
		$$\sino \geq  1 -C_{\delta}^2\left(\delta + 2\delta\nu+2\nu\right)^2.$$
	\end{lemma}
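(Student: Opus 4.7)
The plan is to exhibit a specific angle $\omega_0$ that already lies in the set defining $\omega_{\sup}$ and whose sine attains the desired lower bound. Concretely, I would set $s_0 := 1 - C_\delta^2(\delta + 2\delta\nu + 2\nu)^2$ and $\omega_0 := \arcsin(s_0) \in [0, \pi/2)$. Then the goal reduces to verifying $\omega_0 \geq \arcsin\bigl(C_\delta[\delta\tan\omega_0 + (1+\delta)\nu\sec\omega_0]\bigr)$, for once this is established, $\omega_0$ belongs to the set in (\ref{def:omegasup}), so $\omega_{\sup} \geq \omega_0$ and hence $\sin\omega_{\sup} \geq s_0$, which is exactly the claim.

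First I would rewrite the target inequality in its equivalent algebraic form by multiplying through by $\cos\omega_0 > 0$, reducing the task to
\begin{equation*}
s_0 \cos\omega_0 \;\geq\; C_\delta\bigl(\delta s_0 + (1+\delta)\nu\bigr).
\end{equation*}
This form also automatically guarantees $C_\delta[\delta\tan\omega_0 + (1+\delta)\nu\sec\omega_0] \leq s_0 \leq 1$, so the outer $\arcsin$ is well-defined and the passage from the arcsin formulation to the sine formulation is valid by monotonicity. To bound the left-hand side from below, I would use $\cos\omega_0 = \sqrt{(1-s_0)(1+s_0)} \geq \sqrt{1-s_0}$ via the trivial estimate $1+s_0 \geq 1$; by the choice of $s_0$, the right-hand side equals $C_\delta(\delta + 2\delta\nu + 2\nu)$, giving $s_0\cos\omega_0 \geq C_\delta s_0(\delta + 2\delta\nu + 2\nu)$.

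It then remains to verify $s_0(\delta + 2\delta\nu + 2\nu) \geq \delta s_0 + (1+\delta)\nu$, which upon cancellation simplifies to $2 s_0 (1+\delta)\nu \geq (1+\delta)\nu$, i.e., $s_0 \geq \tfrac{1}{2}$. This numerical check is the only place where the hypotheses $\delta,\nu \leq 0.04$ enter quantitatively, and I would verify it using the bound $C_\delta \leq 5$ from the paragraph preceding Lemma \ref{sin05}: since $\delta + 2\delta\nu + 2\nu \leq 0.04 + 0.0032 + 0.08 < 0.124$, we obtain $C_\delta^2(\delta + 2\delta\nu + 2\nu)^2 < (5 \cdot 0.124)^2 < 0.4$ and hence $s_0 > 0.6 > \tfrac{1}{2}$. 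The only conceptual subtlety is that Lemma \ref{sin05} controls $\sin\omega_{\sup}$ rather than $s_0$, so this last bound must be established independently for $s_0$; once that is in hand, the chain of elementary trigonometric manipulations closes and the proof is complete.
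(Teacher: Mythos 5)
Your proof is correct, and it takes a genuinely different route from the paper's. The paper argues \emph{from above}: it treats $\omega_{\sup}$ as a fixed point of the defining relation, writing $\sin(\omega_{\sup}) = C_\delta\bigl[\delta\tan(\omega_{\sup}) + (1+\delta)\nu\sec(\omega_{\sup})\bigr]$, then invokes Lemma \ref{sin05} to replace $(1+\delta)\nu$ by $2(1+\delta)\nu\sin(\omega_{\sup})$, and rearranges to obtain $\cos(\omega_{\sup}) \le C_\delta(\delta+2\delta\nu+2\nu)$, finishing with $\sqrt{x}\ge x$ on $[0,1]$. You instead argue \emph{from below}, exhibiting the explicit witness $\omega_0=\arcsin(s_0)$ as a member of the set in (\ref{def:omegasup}); this uses only the trivial fact that a supremum dominates every element of the set, so you avoid both the (not entirely immediate) claim that the supremum satisfies the defining inequality with equality and any dependence on Lemma \ref{sin05}. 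The price is that you must re-derive the bound $s_0>\tfrac12$ numerically, which you do correctly and which is precisely where $\delta,\nu\le 0.04$ and $C_\delta\le 5$ enter --- the same place they enter in the paper's Lemma \ref{sin05}. Your algebra checks out: the reduction to $s_0\cos\omega_0 \ge C_\delta(\delta s_0+(1+\delta)\nu)$ is a valid equivalence (and, as you note, it certifies that the argument of the outer $\arcsin$ lies in $[0,1]$), the estimate $\cos\omega_0\ge\sqrt{1-s_0}=C_\delta(\delta+2\delta\nu+2\nu)$ is sound, and the cancellation leaves exactly $(1+\delta)\nu(2s_0-1)\ge 0$. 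The only formal wrinkle is the degenerate case $\delta=\nu=0$, where $s_0=1$ and $\omega_0=\tfrac{\pi}{2}\notin[0,\tfrac{\pi}{2})$; there the claim is immediate anyway, since the defining condition reduces to $\omega\ge 0$ and hence $\sin(\omega_{\sup})=1$.
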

	\begin{proof}
		It follows directly from \eqref{def:omegasup} that
		\begin{align*}
		 \omega_{\sup}   &= \arcsin \left(  C_{\delta}  \left[ \delta \tan \left( \omega_{\sup} \right) + \left( 1+ \delta \right) \nu \sec \left(  \omega_{\sup} \right)    \right] \right).
		\end{align*}
		Using trigonometric identities we obtain that
		\begin{align*}
		 \sin \left( \omega_{\sup} \right)  &= C_{\delta} \left[ \delta \frac{\sin \left(  \omega_{\sup} \right)}{\sqrt{1-\sin \left(  \omega_{\sup} \right)^2 }}  + \left(1+\delta \right) \nu \frac{1}{\sqrt{ 1- \sin \left( \omega_{\sup} \right)^2  }}    \right].
		\end{align*}
		Lemma \ref{sin05} implies that
		\begin{equation*}
		\sin \left( \omega_{\sup}  \right) \le \frac{ \sin \left( \omega_{\sup} \right) }{\sqrt{1-\sin \left(  \omega_{\sup} \right)^2}} C_{\delta} \left( \delta + 2 \left( 1+ \delta \right) \nu  \right).
		\end{equation*}
	Rearranging terms yields that
		\begin{equation*}
		\sin \left( \omega_{\sup}  \right) \ge \sqrt{1 - C^2_{\delta}  \left( \delta +2\delta \nu +2\nu   \right)^2 }.
		\end{equation*}
		The claim follows then using the fact that $ \sqrt{x} \ge x $ for all $ x \in \left[ 0,1\right] $. 
		\end{proof}	
	\noindent With these preliminary lemmas, we can now prove the following proposition, which is a slightly more general form of Theorem \ref{thm:mainresultreadable}.
		\begin{prop}\label{prop:mainproposition}
	There are absolute constants $c_1, c_2, c_3 >0$ such that if
	\begin{equation}\label{equ:numbermeasurements}
	m \ge c_1 \delta^{-2} \left(s_1 + s_2 \right) \log \left(\max\left\{\frac{n_1}{s_1},\frac{n_2}{s_2}\right\}\right),
	\end{equation}		
	for some $  0 < \delta <0.01$, then with probability at least $ 1-\exp \left( - c_2 m \right) $ the following statement holds uniformly for all $s_1$-sparse $u \in \mathbb{C}^{n_1} $, $s_2$-sparse $v \in \mathbb{C}^{n_2} $ and $ z \in \mathbb{C}^m $ such that $\Vert u \Vert = \Vert v \Vert =1 $ and $ \nu \left(z\right) \le 0.01 $:\\	
	\noindent Let the measurements be given by $ b= \mathcal{A} \left( uv^* \right) +z $ for $ \mathcal{A} $ Gaussian as above and let $ \widetilde J_1 $ be defined by 
	\begin{equation}\label{equ:definitionJ2}
	\widetilde J_1 :=  \left\{  j \in \left[ n_1 \right]: \ \vert u_j \vert \ge  M_{\delta,\nu} \right\},
	\end{equation}
	where
	\begin{equation*}
	M_{\delta, \nu} :=  2 \left( \delta + \nu + \delta \nu \right).
	\end{equation*}
	Then, whenever
	\begin{equation}\label{ineq:peakinessassumption}
	\big\|\Pi_{ \widetilde J_1 } u   \big\|  \Vert v \Vert_{\infty} > c_3 \sqrt{    M_{\delta,\nu} } ,
	\end{equation}
	the iterates $\{X_t\}_{t\in\mathbb{N}}$ generated by Algorithm \ref{SPF.alg} initialized via Algorithm \ref{SPF_alginit}, satisfy
	$$\limsup_{t\to\infty} \|X_t - uv^*\|_\mathrm{F} \leq 8.3 \nu.$$ 
	Furthermore, the convergence is linear in the sense of (\ref{equ:linearconvergence}).
\end{prop}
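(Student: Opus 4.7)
The plan is to combine the tools assembled above into a verification of the sufficient condition (\ref{ineq:sufficientcondition}) and then invoke Theorem \ref{thm39}.

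First I would apply Lemma \ref{thm37} with rank $r=2$ and an absolute isometry constant $\delta\leq 0.01$ to deduce that whenever $m$ satisfies (\ref{equ:numbermeasurements}) with a suitable $c_1$, the operator $\mathcal{A}$ enjoys the $(3s_1,3s_2,2)$-RIP with constant $\delta$ with probability at least $1-\exp(-c_2 m)$. Crucially this event depends only on $\mathcal{A}$, so it serves uniformly for all admissible $(u,v,z)$. Working conditionally on this event, Theorem \ref{thm39} (whose other hypotheses $\delta,\nu\leq 0.08$ are met) yields the claimed convergence as soon as $\sin(\angle(v_0,v))<\sin(\omega_{\sup})$, and by Lemma \ref{lemma8.10} it therefore suffices to establish (\ref{ineq:sufficientcondition}); note that $\omega_{\sup}$ is a meaningful quantity by Lemma \ref{sin05}.

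Introduce the shorthand $P:=\|\Pi_{\widehat J_1}u\|$, $Q:=\|\Pi_{\widehat J_2}v\|$, $\alpha:=\|\Pi_{\widetilde J_1}u\|$, $\tau:=\alpha\,\|v\|_\infty$, $S:=\sin(\omega_{\sup})$, and $M:=M_{\delta,\nu}$, so that the peakiness hypothesis (\ref{ineq:peakinessassumption}) reads $\tau>c_3\sqrt{M}$. Lemma \ref{lemma:lastlemma} yields $\widetilde J_1\subset\widehat J_1$ and hence $P\geq\alpha\geq\tau$, while Lemma \ref{lemma:supportlowerbound} yields $PQ\geq\tau-M$. Using $P^2Q^2\geq(\tau-M)^2$, the condition (\ref{ineq:sufficientcondition}) is implied by $P^2-(\tau-M)^2<(SP-(1+S)M/2)^2$. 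Once one verifies $SP-(1+S)M/2>0$ (which follows from $SP\geq S\tau>(1+S)M/2$ provided $c_3$ is a fixed constant larger than $2$ and $M$ is small), squaring and rearranging transforms this into the equivalent assertion
\begin{equation*}
P^2(1-S^2)+SP(1+S)M<(1+S)^2 M^2/4+(\tau-M)^2.
\end{equation*}

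For the final analytic step, Lemma \ref{sin2} gives $1-S^2\leq C_\delta^2 M^2\leq 25 M^2$, and combined with $P,S\leq 1$ the left-hand side is bounded above by $25M^2+2M$. On the right-hand side, the smallness of $M$ (given by $M\leq 0.05$ since $\delta,\nu\leq 0.01$) together with $\tau>c_3\sqrt{M}$ yields $(\tau-M)^2\geq(c_3-1)^2 M$, so the whole inequality reduces to the absolute bound $(c_3-1)^2>25M+2$, which is met for a sufficiently large absolute constant $c_3$ (for instance $c_3=4$). Theorem \ref{thm39} then delivers the stated linear convergence. The principal obstacle, and what dictates the peakiness exponent $1/2$ in (\ref{ineq:peakinessassumption}), lies in this last step: the cross term $SP(1+S)M$ on the left is only of order $M$, matching $(\tau-M)^2\asymp c_3^2 M$ on the right, so a strict inequality requires $c_3$ to exceed a fixed constant and rules out any weaker threshold such as $\tau\gtrsim M$.
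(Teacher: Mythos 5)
Your proposal is correct and follows essentially the same route as the paper: establish the $(3s_1,3s_2,2)$-RIP via Lemma \ref{thm37}, reduce via Theorem \ref{thm39} and Lemma \ref{lemma8.10} to the sufficient condition (\ref{ineq:sufficientcondition}), and verify that condition using Lemmas \ref{lemma:lastlemma}, \ref{lemma:supportlowerbound}, \ref{sin05} and \ref{sin2}; the only difference is organizational, since you expand the square and collect everything into one inequality while the paper bounds the two summands of (\ref{ineq:sufficientcondition}) separately against the quantity $(C_\delta^2+1)M_{\delta,\nu}$. One harmless constant slip: the statement of Lemma \ref{sin2} only yields $1-\sin^2(\omega_{\sup}) = (1-\sin(\omega_{\sup}))(1+\sin(\omega_{\sup})) \le 2C_\delta^2 M_{\delta,\nu}^2 \le 50\, M_{\delta,\nu}^2$ rather than $25\, M_{\delta,\nu}^2$, but your concluding bound survives with $c_3=4$ because $50\, M_{\delta,\nu}+2 \le 4.1 < 9$.
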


\begin{proof}[Proof of Proposition \ref{prop:mainproposition}]	
	Assumption (\ref{equ:numbermeasurements}) and \Cref{thm37} yield that with probability at least $1 - \exp \left(-c  m \right) $ the ($3s_1$,$3s_2$,$2$)-restricted isometry property holds with constant $\delta $. 
	For the remainder of the proof, we will consider the event that the restricted isometry property holds for such $\delta$. 	
	We obtain
	\begin{equation*}
	\big\|\Pi_{ \widetilde J_1 } u   \big\|  \Vert v \Vert_{\infty}  \ge  \left(    \sqrt{  C^2_{\delta} +1 }  + 1       \right)  \sqrt{M_{\delta,\nu}}
	\end{equation*}
	from $ 2\le C_{\delta} \le 5 $ and by choosing the constant $c_3$ in assumption (\ref{ineq:peakinessassumption}) large enough. Combining this with Lemma \ref{lemma:supportlowerbound}  we obtain that
	\begin{equation}\label{ineq:chain4}
	\begin{split}
	\big\|\Pi_{ \widehat J_1 } u   \big\|  \big\|  \Pi_{ \widehat J_2 } v   \big\| &\ge   \big\|\Pi_{ \widetilde{ J_1} } u   \big\|    \Vert v \Vert_{\infty}  - M_{\delta, \nu}.\\
	&>  \sqrt{  \left( C^2_{\delta} + 1  \right)        M_{\delta, \nu} },
	\end{split}
	\end{equation}
	where we used that $\sqrt{x} \ge x $ for all $ x \in \left[0,1\right] $. This yields a lower bound for the second summand of the right-hand side of (\ref{ineq:sufficientcondition}). To bound the first summand we estimate
	\begin{equation}\label{ineq:chain6}
	\begin{split}
	&\sino \piu - \left( \sino +1 \right)  \left( \delta+\nu + \delta \nu \right)\\
	\ge&   \left( 1- C^2_{\delta} \left( \delta + 2\nu + 2\delta \nu  \right)^2 \right) \piu  -2 \left( \delta + \nu +\delta \nu \right)   \\
	\ge&   \piu - C^2_{\delta} \left( \delta +2 \nu + 2\delta \nu  \right)^2 -2 \left( \delta + \nu + \delta \nu \right)   \\
	\ge &   \piu - \left( C^2_{\delta}+1 \right)   M_{\delta,\nu}  \\
	\ge & 0.
	\end{split}
	\end{equation}
	In the first line we used \Cref{sin2} and the fact that $ \sino \le 1 $. The second line is due to $ \piu \le 1  $ and the third inequality is due to $ \delta \ge 0 $, $ \nu \ge 0 $. In order to verify the last inequality it is enough to observe that due to Lemma \ref{lemma:lastlemma} and due to assumption (\ref{ineq:peakinessassumption}) with $c_3$ large enough
	\begin{align*}
	\piu  &\ge \Vert \Pi_{\widetilde{J_1}}  u \Vert \ge \Vert \Pi_{\widetilde{J_1}}  u \Vert   \big\|  \Vert v \Vert_{\infty}  \ge \left( C^2_{\delta} + 1  \right) M_{\delta,\nu},
	\end{align*}
	where the last inequality uses that $ C_{\delta} \le 5 $ and $ 0 \le \delta, \nu \le 0.01 $. Hence, by squaring (\ref{ineq:chain6}) we obtain that
	\begin{equation}\label{ineq:chain5}
	\begin{split}
	&\left( \sino \piu - \left( \sino +1 \right) \left( \delta + \nu + \delta \nu \right)  \right)^2\\
	\ge & \left(    \piu -  \frac{1}{2}\left( C^2_{\delta}+1 \right)  M_{\delta, \nu}    \right)^2\\
	\ge& \piu^2 -  \left( C^2_{\delta}+1 \right) M_{\delta, \nu}  \piu  \\
	\ge & \piu^2 -  \left( C^2_{\delta}+1 \right)  M_{\delta, \nu},
	\end{split}
	\end{equation}
	where in the last line we again used that $ \piu \le 1 $. Together with (\ref{ineq:chain4}) this yields (\ref{ineq:sufficientcondition}), as desired.
	
\end{proof}
\noindent Finally, we will deduce Theorem \ref{thm:mainresultreadable} from Proposition \ref{prop:mainproposition}.
\begin{proof}[Proof of Theorem \ref{thm:mainresultreadable}]
	We will prove this result by applying Proposition \ref{prop:mainproposition} with 
	\begin{equation}
	\delta = \min  \left\{  \frac{\xi}{6 \sqrt{2k}}     ;      \frac{\xi^2 \mu^2}{8c_3^2}     \right\}.
	\end{equation}
     Let $ u \in \mathbb{C}^{n_1}$ $s_1$-sparse, $ v \in \mathbb{C}^{n_2} $ $s_2$-sparse and $z \in \mathbb{C}^m $ such that the assumptions of Theorem \ref{thm:mainresultreadable} are satisfied. Without loss of generality we may assume in the following that $\Vert  u \Vert = \Vert v \Vert =1  $.
	First, we note that invoking $ \delta, \nu < 0.01 $ and potentially decreasing the size of $C_3$ we have that
	\begin{align*}
	2 \left( \delta + \nu \left(z\right) + \delta \nu \left(z\right)   \right) < 2 \left( \delta + 2 \nu \left(z\right) \right) \le \frac{\xi}{\sqrt{2k}}.
	\end{align*}
	Hence, we obtain that
	\begin{equation}\label{equ:Jinclusion}
	\breve{J}_1:= \left\{  j \in \left[ n_1 \right]:  \ \vert u_j \vert \ge \frac{\xi}{\sqrt{2k}}   \right\}  \subset \widetilde{J}_1,
	\end{equation}
	where $ \widetilde{J}_1 $ is the set defined in (\ref{equ:definitionJ2}). 
	Note that
	\begin{equation*}
	\sum_{i \in \left[k\right] \backslash \breve{J}_1 } \left( u^*_i \right)^2 < \sum_{i \in \left[k\right] \backslash \breve{J}_1 } \frac{\xi^2}{2k} \le  \frac{\xi^2}{2},
	\end{equation*}
	where in the first inequality we have used that $ u^*_i < \frac{\xi}{\sqrt{2k}} $ for all $ i \in \left[k\right] \backslash \breve{J}_1 $. By the assumption $ \Vert u \Vert_{\left[k\right]} \ge \xi $ this yields that $ \sum_{i \in \left[k\right] \cap \breve{J}_1  }  \left( u^*_i \right)^2  \ge \frac{\xi^2}{2} $, which in turn implies that $  \Vert \Pi_{\breve{J}_1}  u  \Vert \ge \frac{\xi}{\sqrt{2}} $. By the inclusion (\ref{equ:Jinclusion}) we obtain that $   \Vert \Pi_{ \widetilde J_1 }  u  \Vert \ge \frac{\xi}{\sqrt{2}} $. Hence,  using the assumption $ \Vert v \Vert_{\infty} \ge \mu $, our choice of $\delta$, the assumption on the noise level $ \nu \left( z \right) $ and potentially again decreasing the value of the constant $C_3$ we obtain that
	\begin{equation*}
	\Vert \Pi_{\widetilde{J}_1}  u   \Vert \Vert v \Vert_{\infty}  \ge \frac{\xi \mu }{ \sqrt{2} } \ge c_3 \sqrt{ M_{\delta, \nu} }.
	\end{equation*}
	This shows that (\ref{ineq:peakinessassumption}) is satisfied. Hence, we can apply Proposition \ref{prop:mainproposition} and by inserting our choice of $\delta$ into (\ref{equ:numbermeasurements}), so choosing the constant $C_1$ large enough, we obtain the main result.
	
\end{proof}

	\section{Outlook}
	We see many interesting directions for follow-up work. Most importantly, it remains to explore whether additional constraints on the signals to be recovered are truly necessary (cf. our discussion on to SparsePCA in Section \ref{Initialization}). Even if this is the case, there is substantial room for improvement with respect to the noise-dependence of the recovery results. A direction to proceed could be to consider stochastic noise models instead of deterministic noise. Also in this work we exclusively considered operators $\mathcal{A}$ constructed using Gaussian matrices. However, in many applications of interest, the measurement matrices possess a significantly reduced amount of randomness. For example, in blind deconvolution one typically encounters rank-one measurements. That is, the restricted isometry property as used in this paper does not hold. Thus, one needs additional insight to study whether  there exists a computationally tractable initialization procedure at a near-optimal sampling rate. First steps in this direction were taken in \cite{lee2015rip,lee2017blind}, but a lot of questions remain open.

%

	\section*{Acknowledgements}
	Jakob Geppert is supported by the  German Science Foundation (DFG) in the Collaborative Research Centre ``SFB 755: Nanoscale Photonic Imaging'' and partially in the framework of the Research Training Group ``GRK 2088: Discovering Structure in Complex Data: Statistics meets Optimization and Inverse Problems''. Felix Krahmer and Dominik St\"oger have been supported by the German Science Foundation (DFG) in the context of the joint project ``SPP 1798: Bilinear Compressed Sensing'' (KR 4512/2-1). Furthermore, the authors want to thank Yoram Bresler and Kiryung Lee for helpful discussions.


	\bibliography{m-thesis}~

\appendix

		\section{Proof of Lemma \ref{lemma:supportlowerbound}}\label{section:supportlowerbound}

For the proof of Lemma \ref{lemma:supportlowerbound} we will use the following result.
\begin{lemma}\label{lemma:closeident}[Lemma A.2 and Lemma A.3 in \cite{paper2}]
	Assume that the $(3s_1, 3s_2, 2 )$-restricted isometry property is fulfilled for some restricted isometry constant $ \delta >0 $. Assume that the cardinality of $\widetilde J_1 \subseteq \left[ n_1 \right]$, respectively $\widetilde J_2 \subseteq \left[ n_2\right]$ is at most $2s_1$, respectively $ 2s_2 $.
	Then, whenever $u\in \mathbb{C}^{n_1}$ is at most $2s_1$-sparse and $v \in \mathbb{C}^{n_2}$ is at most $2s_2$-sparse, we have that
	\begin{align*}
	\|\Pi_{\widetilde J_1}[(\mathcal{A}^\ast\mathcal{A}- I)(uv^*)]\Pi_{\widetilde J_2}\| \leq \delta \|uv^*\|_\mathrm{F}.
	\end{align*}
	Furthermore for all $ z \in \mathbb{C}^n $ and for all $\widetilde J_1 \subseteq \left[ n_1 \right]$, respectively $\widetilde J_2 \subseteq \left[ n_2\right]$, with cardinality at most $s_1$, respectively $ s_2 $, we have that
	\begin{align*}
	\|\Pi_{\widetilde J_1}[\mathcal{A}^\ast(z)]\Pi_{\widetilde J_2}\| \leq \sqrt{1+\delta}\|z\|_{\ell_2}.
	\end{align*}
\end{lemma}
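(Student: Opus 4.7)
The plan is to prove both bounds by dualizing each spectral norm into a supremum of a bilinear pairing and then invoking the RIP; the first claim requires an additional polarization step since a quadratic deviation is involved, while the second is a direct Cauchy--Schwarz argument.

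For the first bound I would start from the variational identity
\[
\|\Pi_{\widetilde J_1}[(\mathcal{A}^\ast\mathcal{A}-I)(uv^\ast)]\Pi_{\widetilde J_2}\| \;=\; \sup_{\substack{\|x\|=\|y\|=1\\ \supp(x)\subset\widetilde J_1\\ \supp(y)\subset\widetilde J_2}} \bigl|\langle \mathcal{A}(uv^\ast),\mathcal{A}(xy^\ast)\rangle - \langle uv^\ast, xy^\ast\rangle\bigr|.
\]
Setting $W_1 := uv^\ast$ and $W_2 := xy^\ast$, the complex polarization identity expresses the cross term $\langle\mathcal{A}W_1,\mathcal{A}W_2\rangle-\langle W_1,W_2\rangle$ as a fixed linear combination of the four residuals $\|\mathcal{A}(W_1\pm W_2)\|^2-\|W_1\pm W_2\|_F^2$ and $\|\mathcal{A}(W_1\pm iW_2)\|^2-\|W_1\pm iW_2\|_F^2$. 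Each combined matrix $W_1\pm\alpha W_2$ (with $\alpha\in\{\pm1,\pm i\}$) has rank at most $2$, at most $|\supp(u)|+|\widetilde J_1|$ non-zero rows, and at most $|\supp(v)|+|\widetilde J_2|$ non-zero columns, so the $(3s_1,3s_2,2)$-RIP can be applied to each and bounds the corresponding residual by $\delta$ times the Frobenius norm squared. The parallelogram law $\|W_1+W_2\|_F^2+\|W_1-W_2\|_F^2 = 2(\|W_1\|_F^2+\|W_2\|_F^2)$ then collapses the four contributions to a bound of the form $\delta\,\|uv^\ast\|_F\,\|xy^\ast\|_F$. Taking the supremum over the unit vectors $x,y$ and using $\|xy^\ast\|_F=1$ yields assertion (i).

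For the second bound the same duality gives
\[
\|\Pi_{\widetilde J_1}[\mathcal{A}^\ast(z)]\Pi_{\widetilde J_2}\| \;=\; \sup\bigl|\langle\mathcal{A}^\ast(z),xy^\ast\rangle\bigr| \;=\; \sup\bigl|\langle z,\mathcal{A}(xy^\ast)\rangle\bigr|,
\]
with the supremum again over unit $x,y$ supported in $\widetilde J_1,\widetilde J_2$. Cauchy--Schwarz provides the factor $\|z\|$, and since $xy^\ast$ is rank one with row-support of size at most $s_1$ and column-support of size at most $s_2$, the RIP upper bound delivers $\|\mathcal{A}(xy^\ast)\|^2\le(1+\delta)\|xy^\ast\|_F^2 = 1+\delta$, which gives the claimed factor $\sqrt{1+\delta}$.

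The main technical obstacle sits in (i): after polarizing, one has to verify that every combined matrix $W_1\pm\alpha W_2$ actually fits into the ``row-support at most $3s_1$, column-support at most $3s_2$, rank at most $2$'' budget for which the restricted isometry property has been assumed. This is the point at which the hypotheses on $|\widetilde J_1|$, $|\widetilde J_2|$ and the sparsity of $u,v$ enter, and one must track the unions of supports with care so that no combination exceeds the admissible sizes. All remaining steps are routine polarization bookkeeping and direct invocations of the RIP.
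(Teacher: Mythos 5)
First, a point of reference: the paper does not prove this lemma itself --- it imports it verbatim from Lemmas A.2 and A.3 of \cite{paper2} --- so your argument can only be judged on its own merits. Your overall strategy, namely dualizing the spectral norm as a supremum of $\vert\langle\cdot,\,xy^\ast\rangle\vert$ over unit vectors supported in $\widetilde J_1$ and $\widetilde J_2$, polarizing the cross term for the first bound, and applying Cauchy--Schwarz together with the RIP upper bound for the second, is exactly the standard route, and your treatment of the second inequality is complete and correct.

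There is, however, a genuine gap in the first part, and you have in fact located it yourself without resolving it. After polarization you must apply the $(3s_1,3s_2,2)$-RIP to the matrices $uv^\ast\pm\alpha\, xy^\ast$. Their row support is contained in $\supp(u)\cup\widetilde J_1$, and under the stated hypotheses ($u$ at most $2s_1$-sparse, $\vert\widetilde J_1\vert\le 2s_1$) this union can have cardinality $4s_1$, which exceeds the $3s_1$ budget; the same overshoot occurs for the columns. So the sentence ``the $(3s_1,3s_2,2)$-RIP can be applied to each'' is false in general: you either need to assume the $(4s_1,4s_2,2)$-RIP, or add an overlap hypothesis such as $\vert\supp(u)\cup\widetilde J_1\vert\le 3s_1$ and $\vert\supp(v)\cup\widetilde J_2\vert\le 3s_2$ (which does hold everywhere the present paper invokes the lemma, since there $u$ is $s_1$-sparse and the relevant index sets have cardinality at most $s_1$, respectively $s_2+1$). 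A second, smaller issue: the four-term complex polarization as you set it up yields a constant $2\delta$ (or $\sqrt{2}\,\delta$), not $\delta$. To obtain $\delta$ you should first replace $xy^\ast$ by $e^{i\theta}xy^\ast$ with $\theta$ chosen so that the cross term $\langle\mathcal{A}(uv^\ast),\mathcal{A}(e^{i\theta}xy^\ast)\rangle-\langle uv^\ast,e^{i\theta}xy^\ast\rangle$ is real and nonnegative --- this changes neither supports nor rank --- and then use only the two-term real polarization together with the parallelogram law, as you describe.
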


\begin{proof}[Proof of Lemma \ref{lemma:supportlowerbound}]
 Recall that $b = \mathcal{A}\left(X\right) + z $ and define $k_1$ and $k_2$ by
\begin{equation}\label{equ:defwidehat}
\begin{split}
k_1 &:= \underset{ k \in \left[ n_2 \right] }{\argmax} \  \vert v_k \vert\\
k_2 & := \underset{ k \in \left[ n_2 \right] }{\argmax}    \big\|\Pi_{\widehat J_1}[\mathcal{A}^\ast(b)]\Pi_{  \left\{ k \right\} }\big\|_\mathrm{F}.
\end{split}
\end{equation}
The starting point of our proof is the observation that
\begin{equation}\label{ineq:chain1}
\big\|\Pi_{\widehat J_1}[\mathcal{A}^\ast(b)]\Pi_{ \left\{ k_2 \right\} }\big\|_\mathrm{F} \ge \big\|\Pi_{\widehat J_1}[\mathcal{A}^\ast(b)]\Pi_{  \left\{ k_1 \right\} }\big\|_\mathrm{F} \ge \big\|\Pi_{\widetilde{J_1}}[\mathcal{A}^\ast(b)]\Pi_{ \left\{ k_1 \right\} }\big\|_\mathrm{F},
\end{equation}
where the first inequality is due to the definition of $k_2$ and the second one follows from $  \widetilde{J_1}  \subset \widehat J_1 $, which is due to Lemma \ref{lemma:lastlemma}. The right-hand side of the inequality chain can be estimated from below by 
\begin{equation}\label{ineq:chain2}
\begin{split}
&\big\|\Pi_{\widetilde{J_1}}[\mathcal{A}^\ast(b)]\Pi_{ \left\{ k_1 \right\} }\big\|_\mathrm{F}\\  
\geq & \big\|\Pi_{\widetilde{J_1} } uv^* \Pi_{ \left\{ k_1 \right\} }\big\|_\mathrm{F} -  \big\|\Pi_{ \widetilde{J_1}} \left[  \left( \mathcal{A}^* \mathcal{A} - I  \right) \left(uv^*\right) \right]  \Pi_{ \left\{ k_1 \right\} }\big\|_\mathrm{F} -  \big\|\Pi_{\widetilde{J_1} } \mathcal{A}^* \left(z\right) \Pi_{ \left\{ k_1 \right\} }\big\|_\mathrm{F} \\
\geq & \big\|\Pi_{ \widetilde{J_1} } uv^* \Pi_{ \left\{ k_1 \right\} }\big\|_\mathrm{F} -  \left(  \delta \Vert uv^* \Vert_F  +  \sqrt{1 + \delta}  \Vert z \Vert \right) \\
\ge &    \big\|\Pi_{ \widetilde{J_1} } u   \big\|  \Vert v \Vert_{\infty} -  \left( \delta + \nu +\delta \nu  \right).
\end{split}
\end{equation}
In the first inequality we used $b= \mathcal{A} \left(uv^*\right) + z$ and the triangle inequality. The second inequality follows from Lemma \ref{lemma:closeident}. The last line follows from $ \Vert uv^* \Vert_F =1 $ and $ \Vert z \Vert= \nu $.
Next, we will estimate the left-hand side of (\ref{ineq:chain1}) by
\begin{equation}\label{ineq:chain3}
\begin{split}
&\big\|\Pi_{\widehat J_1}[\mathcal{A}^\ast(b)]\Pi_{ \left\{ k_2 \right\} }\big\|_\mathrm{F}\\  
\leq & \big\|\Pi_{ \widehat J_1 } uv^* \Pi_{ \left\{ k_2 \right\} }\big\|_\mathrm{F} +  \left(  \delta \Vert uv^* \Vert_F  +  \sqrt{1 + \delta}  \Vert z \Vert \right) \\
\le &    \big\|\Pi_{ \widehat J_1 } u   \big\|  \big\|  \Pi_{ \left\{ k_2 \right\} } v   \big\| + \left( \delta + \nu +\delta \nu  \right) \\
\le &    \big\|\Pi_{ \widehat J_1 } u   \big\|  \big\|  \Pi_{ \widehat J_2 } v   \big\| +   \left( \delta + \nu +\delta \nu  \right).
\end{split}
\end{equation}
The first two lines are obtained by an analogous reasoning as for (\ref{ineq:chain2}). The last line is due to $  \left\{ k_2 \right\}  \subset \widehat{J_2} $, which is a consequence of the definition of $ \widehat J_2 $ (\ref{equ:defwidetilde}) and the definition of $ \left\{ k_2 \right\} $ (\ref{equ:defwidehat}). We finish the proof by combining the inequality chains (\ref{ineq:chain1}), (\ref{ineq:chain2}), and (\ref{ineq:chain3}).
\end{proof}
%
\end{document}